\tikzset{
	big arrow/.style={
		decoration={markings,mark=at position 1 with {\arrow[scale=3,#1]{>}}},
		postaction={decorate},
		shorten >=0.4pt}}
\newcommand{\by}{\ensuremath{\bold{y}}}
\newcommand{\bz}{\ensuremath{\bold{z}}}
\newcommand{\bu}{\ensuremath{\bold{u}}}
\newcommand{\bv}{\ensuremath{\bold{v}}}
\newcommand{\bw}{\ensuremath{\bold{w}}}
\newcommand{\bm}{\ensuremath{\bold{m}}}
\newcommand{\bn}{\ensuremath{\bold{n}}}
\newcommand{\bi}{\ensuremath{\bold{i}}}
\newcommand{\bj}{\ensuremath{\bold{j}}}
\newcommand{\zv}{\ensuremath{\bold{0}}}
\newcommand{\A}{\ensuremath{\mathcal{A}}}
\newcommand{\B}{\ensuremath{\mathcal{B}}}
\newcommand{\calP}{\ensuremath{\mathcal{P}}}
\newcommand{\calR}{\ensuremath{\mathcal{R}}}
\newcommand{\M}{\ensuremath{\mathcal{M}}}
\newcommand{\N}{\ensuremath{\mathbb{N}}}
\newcommand{\calN}{\ensuremath{\mathcal{N}}}
\newcommand{\Z}{\ensuremath{\mathbb{Z}}}
\newcommand{\R}{\ensuremath{\mathbb{R}}}
\newcommand{\Q}{\ensuremath{\mathbb{Q}}}
\newcommand{\Exp}{\ensuremath{\mathbb{E}}}
\newcommand{\bigO}{\ensuremath{\mathcal{O}}}
\newcommand{\ce}[1]{\ensuremath{\left(#1 \right)}}
\newcommand{\size}[1]{|\!|#1|\!|}
\newcommand{\term}{\mathcal{L}}
\newcommand{\conf}{\mathit{C}}
\newcommand{\Dec}{\mathit{Dec}}
\newcommand{\Sl}{\mathit{SimLen}}
\newcommand{\Ipath}{\mathit{Ipath}}
\newcommand{\Spath}{\mathit{Spath}}
\newcommand{\Term}{\mathit{Term}}
\newcommand{\minup}{\min_{\mathcal{A}}}
\newcommand{\maxup}{\max_{\mathcal{A}}}
\newcommand{\eps}{\varepsilon}
\newcommand{\tran}[1]{\stackrel{#1}{\rightarrow}}
\newcommand{\MP}{\ensuremath{\mathrm{MP}}}
\newcommand{\mec}{M}
\begin{document}
\title{Deciding Fast Termination for Probabilistic VASS with Nondeterminism\thanks{Tom\'{a}\v{s} Br\'{a}zdil and Anton\'{\i}n Ku\v{c}era are supported by the Czech Science Foundation Grant No.~18-11193S. Krishnendu Chatterjee is supported by the Austrian Science Fund (FWF) NFN Grants S11407-N23 (RiSE/SHiNE). Petr Novotn\'y and Dominik Velan are supported by the Czech Science Foundation Grant No.~GJ19-15134Y.}} %TODO Please add

\titlerunning{Deciding Fast Termination for Probabilistic VASS}%optional, please use if title is longer than one line

\author{Tom\'{a}\v{s} Br\'{a}zdil\inst{1} \and
Krishnendu Chatterjee\inst{2} \and
Anton\'{\i}n Ku\v{c}era\inst{1}%\orcidID{0000-0002-6602-8028} 
\and
Petr Novotn\'{y}\inst{1} \and
Dominik Velan\inst{1}}

\authorrunning{T.~Br\'{a}zdil et al.}

\institute{Faculty of Informatics, Masaryk University\\
\email{\{xbrazdil,tony,petr.novotny,xvelan1\}@fi.muni.cz}\and
IST Austria\\
\email{krishnendu.chatterjee@ist.ac.at}} 

\sloppy
\maketitle

%TODO mandatory: add short abstract of the document
\begin{abstract}
A probabilistic vector addition system with states (pVASS) is a finite state Markov process augmented with non-negative integer counters that can be incremented or decremented during each state transition, blocking any behaviour that would cause a counter to decrease below zero. The pVASS can be used as abstractions of probabilistic programs with many decidable properties. The use of pVASS as abstractions requires the presence of nondeterminism in the model. In this paper, we develop techniques for checking fast termination of pVASS with nondeterminism.
That is, for every initial configuration of size n, we consider the worst expected number of transitions needed to reach a configuration with some counter negative (the expected termination time). We show that the problem whether the asymptotic expected termination time is linear is decidable in polynomial time for a certain natural class of pVASS with nondeterminism. Furthermore, we show the following dichotomy: if the asymptotic expected termination time is not linear, then it is at least quadratic, i.e., in $\Omega(n^2)$. 

%Technically, the starting point of our work are the previous results characterizing linear asymptotic termination time in \emph{non-probabilistic} VASS with demonic nondeterminism.  The core of our result is a detailed probabilistic analysis of certain scenarios in probabilistic VASS and quantifying the probability of \emph{deviating} from these scenarios. This is achieved by designing and analyzing a sequence of suitable martingales and applying concentration inequalities. 
\keywords{angelic and demonic nondeterminism \and termination time \and probabilistic VASS}
\end{abstract}

% !TeX root = main.tex
\section{Introduction}
\label{sec-intro}

\textbf{Probabilistic Programs \& VASS} Probabilistic systems play an important role in various areas of computing such as machine learning \cite{Ghahramani:15-prob-ai-nature}, network protocol design~\cite{FosterKMRS:16-prob-netkat}, robotics~\cite{ThrunBurgardFox:2005-robotics}, privacy and security~\cite{BartheGGHS:16-diff-privacy-prob-couplings}, and many others. For this reason, verification of probabilistic systems receives a considerable attention of the verification community. As in the classical (non-probabilistic) setting, in probabilistic verification one typically constructs a suitable abstract model over-approximating the real behaviour of the system. In the past, the verification research was focused mostly on finite-state probabilistic models~\cite{BK:book} as well as some special infinite-state classes, such as probabilistic one-counter~\cite{BKK:pOC-time-LTL-martingale-JACM} or pushdown automata~\cite{EKM:prob-PDA-PCTL-LMCS,EY:RMC-LTL-complexity-TCL}. However, the recent proliferation of general, Turing-complete \emph{probabilistic programming languages} (PPLs) necessitates the use of more complex models, that can encompass multiple potentially unbounded numerical variables.

In the classical setting, one of the standard formalisms used for program abstraction are \emph{vector addition systems with states (VASS)}~\cite{KM69}. Intuitively, a VASS is a finite directed graph where every edge is assigned a vector of integer counter updates of a fixed dimension~$d$. A \emph{configuration} $p\bv$ is specified by a current state $p$ and a vector of current counter values $\bv$. The computation proceeds by moving along the edges in the graph and performing the respective updates on the counters. Since VASS themselves are not Turing-complete, they have many decidable properties, and they have been successfully used as program abstractions in termination and complexity analysis~\cite{SZV14} as well as for reasoning about parallel programs~\cite{EN94,KM69} and parameterized systems~\cite{Bloem16,conf/icalp/AminofRZS15}. Applying such an abstraction to a probabilistic program yields a \emph{probabilistic VASS (pVASS)}, which allows for a \emph{probabilistic choice} of a transition in some states. Moreover, during the abstraction, certain complex programming constructs such as \texttt{if-then-else} branching are replaced with \emph{nondeterministic choice}. To ensure that the abstraction over-approximates the possible behaviour, we typically interpret the nondeterminism as \emph{demonic}, i.e., the choice is resolved by adversarial environment. However, in certain settings it makes sense to consider \emph{angelic nondeterminism}, to be resolved by a yet-to-be-designed controller (e.g., a scheduling mechanism in a queuing system).

\noindent
\textbf{Termination Complexity} One of the fundamental problems in program analysis is to evaluate a given program's runtime. In the classical setting, this problem emerges in various flavours, ranging from worst-case execution time-analysis~\cite{Wilhelm&al:2008:wcet-survey,Cassez:11-wcet-survey} in real-time systems to obtaining bounds on the number of execution steps~\cite{GulwaniMehraChulimbi:09-speed}, analysing asymptotic~\cite{CFG17}, or amortized complexity~\cite{journals/toplas/0002AH12}. VASS-based abstractions were successfully used in the latter scenario~\cite{SZV14}. 

Recently, several approaches to reason about the expected runtime of probabilistic programs were developed~\cite{KaminskiKMO:18-wp-expected-runtime-jacm,NgoCarbonneauxHoffmann:18-pp-resource-analysis}. The analysis is much more demanding than in the classical case. For instance, deciding whether the expected runtime is finite is harder (i.e. higher in the arithmetic hierarchy) than deciding whether a probabilistic program terminates with probability one~\cite{KaminskiKatoenMateja:19-pp-hardness-act-inf}. Additional obstacle is the inherent \emph{non-compositionality} of expected runtimes. The work~\cite{KaminskiKMO:18-wp-expected-runtime-jacm} gives an example of two programs, $P_1$, and $P_2$, which both consist of a single loop (i.e. they have a strongly connected control flow graph) and whose expected runtime is \emph{linear} in the magnitude of initial variable valuations; but running $P_2$ after $P_1$ yields the program $P_1; P_2$ whose expected runtime is \emph{infinite}.

These intricacies spawn fundamental questions about probabilistic models, which we aim to address: \emph{Is there a sufficiently powerful probabilistic formalism where a fast (i.e., linear-time) termination from an arbitrary initial configuration is decidable? Can the decision procedure proceed by analysing individual strongly-connected components and composing the results? Can we provide a lower bound on the expected runtime in the case that it is not linear?} These questions were previously considered in the non-probabilistic setting, namely in the domain of VASS~\cite{BCKNVZ:VASS-termination-LICS}. In this paper, we investigate them in the probabilistic context.

\noindent
\textbf{Our Setting} 
We show that the above questions can be answered affirmatively in the domain in pVASS with nondeterminism, which are Markov decision processes over VASS where the nondeterministic choice is resolved either demonically (i.e. the nondeterminism tries to prolong the computation) or angelically. We consider a basic variant of VASS termination: the \emph{zero termination}, where the computation stops when some counter becomes negative. The \emph{termination complexity} of a given pVASS is a function $\term \colon \N \rightarrow \N \cup \{\infty\}$ assigning to every $n$ the maximal/minimal (in the demonic/angelic case) expected length of a computation initiated in a configuration of size $n$ (the size of $p\bv$ is defined as the maximal component of $\bv$), where the maximum/minimum is taken over all the strategies of the environment (we consider unrestricted, i.e., history-dependent and randomized, strategies).

%The \emph{termination complexity} of a given VASS is a function $\term \colon \N \rightarrow \N \cup \{\infty\}$ assigning to every $n$ the length of the longest computation initiated in a configuration of size $n$ (the size of $p\bv$ is defined as the maximal component of $\bv$). The termination complexity of non-probabilistic VASS was studied in \cite{BCKNVZ:VASS-termination-LICS}, where it was shown that the problem whether $\term \in \bigO(n)$ for a given VASS is decidable in polynomial time and, for a restricted subclass of VASS, it is  possible to determine the least $k$ such that $\term \in \bigO(n^k)$.  The study was continued in \cite{Leroux:VASS-polynomial-term-comp}, where the existence of \emph{some} $k$ such that $\term \in \bigO(n^k)$ was also shown decidable in polynomial time for general VASS. However, all of these results assume \emph{non-probabilistic} VASS.

%In this paper, we extend the scope of worst-case termination time analysis to \emph{probabilistic VASS with demonic nondeterminism}, 
%
%Here, the termination complexity is defined as a function  $\term \colon \N \rightarrow \R \cup \{\infty\}$ assigning to every $n \in \N$ the maximal expected length of a computation initiated in a configuration of size~$n$ over all strategies of the environment (we consider unrestricted, i.e., history-dependent and randomized, strategies).

\noindent
\textbf{Our Results} For \emph{strongly connected} pVASS which contain either a demonic or an angelic non-determinism (but not both) we show that
\begin{enumerate}
	\item The problem whether $\term \in \bigO(n)$ is decidable in \emph{polynomial time}.
	\item If $\term \not\in \bigO(n)$, then $\term \in \Omega(n^{2})$.
	\item If $\term \not\in \bigO(n)$, then for every $\varepsilon > 0$, the probability of all computations of length at least $n^{2-\varepsilon}$ converges to one as $n \rightarrow \infty$, (in the demonic case, this requires  the environment to use appropriate strategies). 
\end{enumerate}
According to \textbf{2.},  $\term \not\in \bigO(n)$ implies that $\term$ is ``at least quadratic''. However, \textbf{3.}\  does not follow from~\textbf{2.} (a more detailed discussion is postponed to Section~\ref{sec-results}). 

We also show that the above results hold in general VASS with angelic nondeterminism, while in the demonic setting they extend to a restricted class pVASS whose maximal end-component (MEC) decomposition yields a directed acyclic graph (DAG), in which case  \textbf{1.} can be solved compositionally by analysing individual MECs. Finally, we show that in pVASS whose MEC-decomposition is not DAG-like, the demonic complexity \emph{cannot} be decided by analysis of individual MECs, since such VASS can emulate the non-compositional example of~\cite{KaminskiKMO:18-wp-expected-runtime-jacm}.

The results build on analogous results for non-probabilistic VASS established in \cite{BCKNVZ:VASS-termination-LICS}, combining them with a novel probabilistic analysis.
\smallskip 

\noindent
\textbf{Paper Organization.} After presenting preliminaries in Section~\ref{sec-prelim}, we focus on the demonic case which contains the main technical contributions. Subsection~\ref{subsec:outline} provides an intuitive outline of our techniques. Subsection~\ref{sec-linear} develops the algorithm for proving linear termination complexity and shows its soundness (i.e. that a yes-answer indeed proves $\term_d(n)\in\mathcal{O}(n)$). Subsection~\ref{sec-scheme} deals with the quadratic lower bound, showing the completeness of our algorithm, and Subsection~\ref{sec-angel} discusses extension of the results to the angelic case. Finally, in Section~\ref{sec:general} we extend the techniques to DAG-like VASS MDPs and discuss the difficulties arising in general VASS. 
Missing proofs are provided in the appendix.

\smallskip 

\noindent
\textbf{Related Work.} 
%VASS~\cite{KM69} (or equivalently Petri nets) are used to model parallel programs~\cite{EN94,KM69} as well as parameterized systems~\cite{Bloem16,conf/icalp/AminofRZS15}.
The termination problems (counter-termination, control-state termination) for classical VASS as well as the related problems of boundedness and coverability have been studied very intensively in the last decades, see, e.g., \cite{Lipton:PN-Reachability,Rackoff:Covering-TCS,Esparza:PN,ELMMN14:SMT-coverability,BG11:Vass}.
The complexity of the termination problem with fixed initial configuration is
EXPSPACE complete~\cite{Lipton:PN-Reachability,Yen92:Petri-Net-logic,AH11:Yen}.
The more general reachability problem is also decidable~\cite{Mayr:PN-reachability,Leroux:VASS-reachability-short,Kosaraju82:VASS-reach-dec}, but computationally hard 
\cite{Lipton:PN-Reachability,DBLP:conf/stoc/CzerwinskiLLLM19}. The best known upper bound is
Ackermannian~\cite{DBLP:journals/corr/abs-1903-08575} (see~\cite{Schmitz16:hyperackermannian-complexity-hierarchy}  for an overview of hyper-Ackermannian complexity hierarchies). 

The problem of existence of infinite computations in VASS has been also studied in
the literature. Polynomial-time algorithms have been presented in~\cite{CDHR10,VCDHRR15} using results of~\cite{KS88}. In the more general context of games played on VASS, even deciding the existence of
infinite computation is coNP-complete~\cite{CDHR10,VCDHRR15}, and various algorithmic approaches
based on hyperplane-separation technique have been studied in~\cite{CV13,JLS15,CJLS17}.

The study on asymptotic termination complexity of non-probabilistic VASS, initiated in~\cite{BCKNVZ:VASS-termination-LICS} was continued in \cite{Leroux:VASS-polynomial-term-comp}, where the existence of \emph{some} $k$ such that $\term \in \bigO(n^k)$ was also shown decidable in polynomial time.

Concerning expected runtime analysis, we note the work~\cite{ChatterjeeFuMurhekar:17-automated-recurrence} which presents a sound (but incomplete) technique for obtaining near-linear asymptotic bounds on recurrence relations arising from certain types of probabilistic programs. 

\vspace{-0.2cm}

\section{Preliminaries}
\label{sec-prelim}

%\subsection{Probabilistic VASS with Demonic Nondeterminism}
%\label{sec-VASS-def}

We use $\N$, $\Z$, $\Q$, and $\R$ to denote the sets of non-negative integers,
integers, rational numbers, and real numbers. 
%The subsets of all \emph{positive} elements of $\N$, $\Q$, and $\R$ are denoted by $\N^+$, $\Q^+$, and $\R^+$.  Further, we use $\N_\infty$ to denote the set $\N \cup \{\infty\}$ where $\infty$ is treated according to the standard conventions.
%
Given a function $f \colon \N \rightarrow \N$, we use  $\bigO(f(n))$ and $\Omega(f(n))$ to denote the sets of all $g \colon \N \rightarrow \N$ such that $g(n) \leq a \cdot f(n)$ and $g(n) \geq b \cdot f(n)$ for all sufficiently large $n \in \N$, where $a,b$ are some positive constants. If $h(n) \in \bigO(f(n))$ and $h(n) \in \Omega(f(n))$, we write $h(n) \in \Theta(f(n))$.

Let $A$ be a finite index set. The vectors of $\R^A$ are denoted by bold letters such as $\bu,\bv,\bz,\ldots$. The component of $\bv$ of index $i\in A$ is denoted by $\bv(i)$. 
%For a matrix
%$A \in \Rset^{A\times
%	B}$ we denote by $A(a,b)$ the element in row of index $a\in A$ and column
%of index by $b\in
%B$, and by $A^{\top}$ the transpose of $A$. 
If the index set is of the form $A=\{1,2,\dots,d\}$ for some positive integer $d$, we write $\R^d$ instead of $\R^A$. For every $n \in \N$, we use $\bn$ to denote the constant vector where all
components are equal to~$n$.
The scalar product of $\bv,\bu \in \R^d$ is denoted by $\bv \cdot \bu$, i.e.,
$\bv\cdot \bu = \sum_{i=1}^d \bv(i)\cdot\bu(i)$. The other
standard operations and relations on $\R$ such as
$+$, $\leq$, or $<$ are extended to $\R^d$ in the component-wise way. In particular,  $\bv < \bu$ if $\bv(i) < \bu(i)$ for every index $i$.
%We also write $\bv \prec \bu$ if $\bv \leq \bu$ and $\bv(i) < \bu(i)$ for some index $i$. 
%The norm of $\bv$ is defined by $\Norm(\bv) = \sqrt{\bv(1)^2 + \cdots+\bv(d)^2}$.
\vspace{-0.2cm}

\subsection{Markov Decision Processes}

\begin{definition}
	\label{def-VASS}
	Let $L$ be a set of \emph{labels}. A \emph{Markov decision process (MDP)} with $L$-labeled transitions is a tuple $\A = \ce{Q, (Q_n,Q_p),T,P}$, where $Q \neq \emptyset$ is a finite set of \emph{states} split into two disjoint subsets $Q_n$ and $Q_p$ of \emph{nondeterministic} and \emph{probabilistic} states, $T \subseteq Q \times L \times Q$ is a finite set of \emph{labeled transitions} such that  every $q \in Q$ has at least one outgoing transition, and $P$ is a function assigning to each $(p,\ell,q) \in T$ where $p \in Q_p$ a positive rational probability so that, for every $p \in Q_p$,  $\sum_{(p,\ell,q) \in T} P(p,\ell,q) =1$. 
	%A \emph{$d$-dimensional VASS MDP} is an MDP where $L = Z^d$.
\end{definition}

A state $q$ is an \emph{immediate successor} of a state $p$ if there is a transition $(p,\ell,q)$ for some $\ell \in L$. A \emph{finite path} in $\A$ of length~$n$ is a finite sequence of the form
$p_0,\ell_1,p_1,\ell_2,p_2,\ldots,\ell_n,p_n$ where $n \geq 0$ and
$(p_i,\ell_{i+1},p_{i+1}) \in T$ for all $0 \leq i < n$. If $n \geq 1$ and $p_0 = p_n$, then
$\pi$ is a \emph{cycle}. An MDP is \emph{strongly connected} if for each pair of distinct states $p,q$ there is a finite path from $p$ to $q$. An \emph{infinite path} in $\A$ is an infinite sequence $p_0,\ell_1,p_1,\ell_2,p_2,\ldots$ such that $p_0,\ell_1,p_1,\ldots,\ell_n,p_n$ is a finite path for every $n \geq 0$. For a finite sequence of the form $\pi = p_0,\ell_1,p_1,\ell_2,p_2,\ldots,\ell_n,p_n$ and a finite or infinite sequence of the form $\varrho = q_0,\kappa_1,q_1,\kappa_2,\ldots$, where $\pi$ and $\varrho$ are not necessarily paths in $\A$, we use $\pi \odot \varrho$ to denote the \emph{concatenated sequence} $p_0,\ell_1,\ldots,\ell_n,p_n,\kappa_1,q_1,\kappa_2,\ldots$ (we do not require $p_n = q_0$). If $\pi,\varrho$ are both paths in $\A$ and $p_n = q_0$, then $\pi \odot \varrho$ is also a path in $\A$.

A \emph{strategy} is a function $\sigma$ assigning to every finite path $p_0,\ell_1,p_1,\ldots,\ell_n,p_n$ ending in a nondeterministic state a probability distribution over the outgoing transitions of $p_n$. A strategy is \emph{Markovian (M)} if it depends only on the last state $p_n$, and \emph{deterministic (D)} if it always selects some successor state with probability one. The set of all strategies is denoted by $\Sigma$ (the underlying $\A$ is always clearly determined by the context). Every initial state $p \in Q$ and every strategy $\sigma$ determine the probability space over infinite paths initiated in $p$ in the standard way, and we use $\calP^\sigma_{p}$ to denote the associated probability measure.

\vspace{-0.1cm}

\subsection{Probabilistic VASS with Nondeterminism}

\begin{definition}
	Let $d \in \N$. A \emph{$d$-dimensional probabilistic VASS with non-determinism (VASS MDP)} is an MDP where the set of labels is $\Z^d$.
\end{definition}

Let $\A = \ce{Q, (Q_n,Q_p),T,P}$ be a $d$-dimensional VASS MDP. The encoding size of $\A$ is denoted by $\size{\A}$, where the integers representing counter updates are written in binary.  A \emph{configuration} of $\A$ is a pair $p\bv$, where $p \in Q$ and $\bv \in \Z^d$. If some component of $\bv$ is negative, then $p\bv$ is \emph{terminal}. The set of all configurations of $\A$ is denoted by $\conf(\A)$. The
\emph{size} of  $p\bv \in \conf(\A)$ is $\size{p\bv} = \size{\bv} = \max \{|\bv(i)| : 1\leq i \leq d\}$. Given $n\in \N$, we say that $p\bv$ is \emph{$n$-bounded} if $\size{p\bv}\leq n$. 

Every (finite or infinite) path $p_0,\bu_1,p_1,\bu_2,p_2,\ldots$ and every initial vector $\bv \in \Z^d$ determine the corresponding \emph{computation} of $\A$, i.e., the sequence of configurations $p_0\bv_0, p_1 \bv_1, p_2 \bv_2,\ldots$ such that $\bv_0 =\bv$ and $\bv_{i+1} = \bv_i + \bu_{i+1}$. For every \emph{infinite} computation  $\pi = p_0 \bv_0,p_1 \bv_1,p_2 \bv_2,\ldots$, let $\Term(\pi)$ be the least $j$ such that $p_j\bv_j$ is terminal. If there is no such $j$, we put  $\Term(\pi) = \infty$ .

Recall that every strategy $\sigma$ and every $p \in Q$ determine a probability space over infinite paths initiated in~$p$ with probability measure $\calP_p^\sigma$. Similarly, $\sigma$~determines the unique probability space over all \emph{computations} initiated in a given $p \bv$, and we use $\calP^\sigma_{p \bv}$ to denote the associated probability measure, and $\Exp^\sigma_{p\bv}[\Term]$ denotes the expected value of $\Term$.

The \emph{angelic/demonic termination complexity} of $\A$ are the functions $\term_a,\term_d \colon \N \rightarrow \R \cup \{\infty\}$ defined as follows, where $\conf_n(\A)$ is the set of all $p\bv \in \conf(\A)$ such that $\size{p\bv} = n$:
\vspace{-1mm}
\begin{eqnarray*}
\term_a(n) & \ = \ & \max_{p\bv \in \conf_n(\A)} \ \inf_{\sigma \in \Sigma}\ \Exp^\sigma_{p\bv}[\Term],\\
\term_d(n) & \ = \ & \max_{p\bv \in \conf_n(\A)} \ \sup_{\sigma \in \Sigma}\ \Exp^\sigma_{p\bv}[\Term].
\vspace{-1mm}
\end{eqnarray*}
%If $\term_a(n) = \infty$ for some $n \in \N$, we say that $\A$ is \emph{non-terminating}, otherwise it is \emph{terminating}. 
We say that the expected angelic/demonic termination time of $\A$ is \emph{linear} if $\term_a(n) \in \bigO(n)$ and $\term_d(n) \in \bigO(n)$, respectively.

\section{Linearity of Demonic Termination Time}
\label{sec-results}

In this paper, we prove the following theorem:

\begin{theorem}
\label{thm-main}
  The problem whether the expected termination time of a given strongly connected VASS MDP $\A$ is linear is decidable in polynomial time. If the expected termination time of $\A$ is \emph{not} linear, then $\term_d(n) \in \Omega(n^2)$. Furthermore, for every $\varepsilon > 0$ we have that
\vspace{-1mm}
  \[
     \lim_{n \rightarrow \infty} \ \sup_{p \in Q,\sigma \in \Sigma} \  \left\{\calP_{p\bn}^\sigma [\Term \geq n^{2-\varepsilon} ] \  \right\} \quad = \quad 1
\vspace{-1mm}
  \] 
\end{theorem}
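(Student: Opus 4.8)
The plan is to reduce the whole dichotomy to the feasibility of a single linear program (LP): its feasibility will certify linearity (and be checkable in polynomial time), while its infeasibility will, through LP duality, hand us an adversarial strategy under which the counters drift upward and hence survive for a quadratic number of steps, yielding both the $\Omega(n^2)$ bound and the concentration estimate.

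First I would search for a \emph{linear ranking supermartingale}: weights $\bw \in \R^d$ with $\bw \ge \zv$ and a state potential $h \colon Q \to \R$ such that, writing $R(p\bv) = \bw\cdot\bv + h(p)$, one has $\bw\cdot\bu + h(q) - h(p) \le -1$ for every (demonic) transition $(p,\bu,q)$, and $\sum_{(p,\bu,q)} P(p,\bu,q)\,(\bw\cdot\bu + h(q)) - h(p) \le -1$ for every probabilistic state $p$. These are linear inequalities in $(\bw,h)$, so feasibility is decidable in polynomial time. The key point is that, since each transition's update $\bu$ is fixed, the guaranteed one-step decrease of $R$ is \emph{independent of $\bv$}; after shifting $h$ so that $R \ge 0$ on non-terminal configurations (valid because $\bw \ge \zv$ and $\bv \ge \zv$ there), the quantity $R(\mathit{config}_t) + t$ is a supermartingale under \emph{every} strategy up to $\Term$. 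Optional stopping then yields $\sup_{\sigma}\Exp^{\sigma}_{p\bv}[\Term] \le R(p\bv) + \bigO(1) = \bw\cdot\bv + \bigO(1)$, so feasibility forces $\term_d(n) \in \bigO(n)$.

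The hard direction is infeasibility. Here I would apply Farkas' lemma: infeasibility gives non-negative multipliers on the constraints which, because the potentials $h$ are free, must satisfy a Kirchhoff conservation law, i.e. they form a non-zero stationary \emph{circulation} $x$ on the transitions that respects the fixed branching probabilities at probabilistic states, and whose flow-weighted total update is non-negative in every coordinate, $\sum_t x(t)\,\bu_t \ge \zv$. Normalising $x$ produces a Markovian strategy $\sigma^*$ whose stationary distribution has mean one-step update $\bar\bu \ge \zv$; thus under $\sigma^*$ \emph{no} counter has negative drift. For the last two claims I would then analyse the resulting Markov-modulated walk. Using the bounded solution $g$ of the Poisson equation for the finite chain induced by $\sigma^*$, each counter satisfies $\bv_t(i) = n + t\,\bar\bu(i) + M_t^{(i)} + \bigO(1)$, where $M^{(i)}$ is a martingale with increments of size $\bigO(\size{\A})$ and $\bar\bu(i) \ge 0$; hence $\bv_t(i)$ can only reach $0$ if $M^{(i)}$ deviates downward by $\approx n$. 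Starting from $p\bn$, an $\bigO(\log n)$-length prefix routes the computation into the recurrent class while changing each counter by only $\bigO(\log n)$ and cannot terminate (counters remain $\Theta(n)$). Kolmogorov's maximal inequality bounds the probability of an $n$-deviation of any single $M^{(i)}$ within $T = c\,n^2$ steps by $\bigO(\size{\A}^2 c)$; choosing $c$ so that the union bound over the $d$ counters is $\le 1/2$ shows all counters stay positive for $\Omega(n^2)$ steps with probability at least $1/2$, giving $\term_d(n) \in \Omega(n^2)$. Replacing Kolmogorov by Azuma--Hoeffding bounds the same deviation within $T = n^{2-\eps}$ steps by $\exp(-\Omega(n^{\eps}))$, so with probability tending to $1$ no counter reaches $0$ before $n^{2-\eps}$ steps, which is exactly $\sup_{p,\sigma}\calP^{\sigma}_{p\bn}[\Term \ge n^{2-\eps}] \to 1$.

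The main obstacle I expect is the dual step together with the Markov-modulated random-walk analysis: one must verify that Farkas duality genuinely yields a strategy with a well-defined recurrent class and coordinate-wise non-negative drift, and then tame the state-dependence of the increments (via the Poisson corrector $g$) so that the clean one-dimensional Kolmogorov and Azuma estimates apply \emph{simultaneously} to all $d$ counters while the computation is steered into the recurrent region with counters still of magnitude $\Theta(n)$. Reconciling the two quantitatively different survival estimates — constant-probability survival over $\Theta(n^2)$ steps, needed for the sharp $\Omega(n^2)$ lower bound (rather than merely $\Omega(n^{2-\eps})$ for all $\eps$), versus high-probability survival over $n^{2-\eps}$ steps, needed for the concentration — is the part where the probabilistic reasoning has to be carried out most carefully.
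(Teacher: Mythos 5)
Your algorithm and its soundness proof are essentially the paper's: your ranking-supermartingale LP is feasible exactly when there is a normal $\bw > \zv$ with $\bi_j \cdot \bw < 0$ for all increments (a zero component of $\bw$ can be perturbed away), and your optional-stopping argument is the paper's Lemma~\ref{lem-MDP-finite} in slightly different clothing. The fatal gap is in the infeasibility direction, at the sentence ``Normalising $x$ produces a Markovian strategy $\sigma^*$ whose stationary distribution has mean one-step update $\bar\bu \ge \zv$.'' Farkas' lemma does give a nonzero circulation $x$ respecting the probabilistic branching with $\sum_t x(t)\,\bu_t \ge \zv$, but this circulation need not be connected: in general it is a positive combination of flows living in \emph{different} BSCCs of \emph{different} MD strategies, and then no single Markovian strategy has it as its stationary flow. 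The paper's running example $\A_1$ (Fig.~\ref{fig-VASS-MDP}) shows this is the typical, not a degenerate, situation: any circulation there has equal weight $\gamma$ on the two crossing edges, so its total update is $\alpha(-\frac{1}{2},\frac{1}{2}) + \beta(\frac{1}{2},-\frac{1}{2}) + \gamma(-2,-2)$, whose coordinate sum is $-4\gamma$; nonnegativity forces $\gamma = 0$ and $\alpha = \beta$, i.e., the circulation splits into the two disjoint loops. Consequently \emph{every} Markovian strategy in $\A_1$ has, in each of its recurrent classes, strictly negative drift in some counter and hence expected termination time $\bigO(n)$ --- your $\sigma^*$ simply does not exist, and the Poisson-equation/Kolmogorov/Azuma analysis that follows has nothing to apply to, even though $\term_d(n) \in \Omega(n^2)$ does hold for $\A_1$.

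What is missing is precisely the paper's main technical contribution for the lower bound: when the dual object is a disconnected mixture, the quadratic behaviour must be witnessed by \emph{history-dependent} strategies that alternate between the MD strategies attached to the components of the circulation --- following $\sigma_i$ for $L(n)\cdot a_i \in \Theta(n)$ steps, then reaching the next component along a switching path, and repeating this $\Theta(n)$ times (the paper's ``scheme'', with the schedule depending on $n$). The probabilistic analysis then cannot be carried out for one stationary chain; the paper does it compositionally: per segment via the one-counter martingale of \cite{BKK:pOC-time-LTL-martingale-JACM} together with Azuma's inequality (Lemma~\ref{lem-first-bound}), and per switch via Markov's inequality (Lemma~\ref{lem-seconf-bound}), before combining the deviation bounds (Lemma~\ref{lem-lower}). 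Your concentration estimates would be sound in the special case where the Farkas circulation is connected (a single end component with coordinatewise nonnegative drift), but the dichotomy of Theorem~\ref{thm-main} is driven exactly by the other case, so the proposal as it stands does not prove the $\Omega(n^2)$ bound or the limit statement.
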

%%%
% To není úplně pravda. Limita by měla jít přes n a supremum má být přes p\bv, kde \size{p\bv} = n.
% Jinak by to mělo být OK (jen v rámci důkazu budeme potřebovat dvě různá epsilon, abychom to dokázali v tomto tvaru). Na začátku ukážeme, že skoro všechny jevy nebudou terminovat do doby c * n^{2-eps'}. A toho c (což bude nějaké malé číslo plynoucí z konstrukce kvadratické cesty v nestochastických VASSech) se budeme muset zbavit později).
%%%

The last part of Theorem~\ref{thm-main} deserves some comments. Recall $\term_d(n) \in \Omega(n^2)$ if there is  $b>0$ such that $\term_d(n) \geq b \cdot n^2$ for all sufficiently large $n$. We prove $\term_d(n) \in \Omega(n^2)$ by showing the existence of $\delta,c > 0$ such that $\calP_{p\bn}^\sigma [\Term \geq n^2/c] \geq \delta$ for all sufficiently large $n$, where~$\sigma$ is a suitable strategy depending on $p\bn$ (then, we can put $b = \delta/c$). Hence, $\term_d(n) \in \Omega(n^2)$ does \emph{not} imply that
$\sup_{\sigma \in \Sigma, p\in Q} \  \left\{\calP_{p\bn}^\sigma [\Term \geq n^2/c ] \  \right\}$
converges to~$1$ as $n \rightarrow \infty$ (for some constant~$c$). The last part of Theorem~\ref{thm-main} shows that for an arbitrarily small $\varepsilon > 0$, we have that $\sup_{\sigma \in \Sigma,p\in Q} \  \left\{\calP_{p\bn}^\sigma [\Term \geq n^{2-\varepsilon} ] \  \right\}$ \emph{does} converge to~$1$ as $n \rightarrow \infty$. The question whether the convergence holds for $\varepsilon = 0$ remains open.

\subsection{Outline of Techniques}
\vspace{-0.2\baselineskip}
\label{subsec:outline}

The proof of Theorem~\ref{thm-main} is non-trivial, and it is based on combining the existing techniques with new analysis invented in this paper. We use VASS MDPs $\A_1$ and $\A_2$ of Fig.~\ref{fig-VASS-MDP} as running examples to illustrate our techniques.

A polynomial-time algorithm deciding asymptotic linearity of termination time for \emph{purely non-deterministic} VASS (where the set $Q_p$ is empty) was given in \cite{BCKNVZ:VASS-termination-LICS}. Theorem~\ref{thm-main} generalizes this result to VASS MDPs. We start by recalling the results of \cite{BCKNVZ:VASS-termination-LICS} and sketching the main ideas behind the proof of Theorem~\ref{thm-main}. These ideas are then elaborated in subsequent sections. 

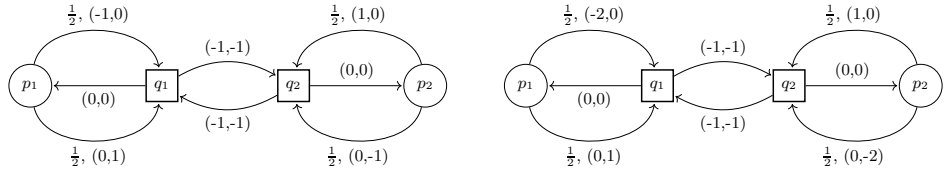
\begin{figure}[t]
	\scalebox{0.7}{%
		\begin{tikzpicture}[shorten >=1pt,node distance=2.5cm,on grid,auto]
		\tikzstyle{state2} = [rectangle,thick,draw=black,minimum size=6mm]
		
		\node[state2] (q_1)   {$q_1$};
		\node[state2] (q_2) [right=of q_1] {$q_2$};
		\node[state] (p_1) [left=of q_1] {$p_1$};
		\node[state] (p_2) [right=of q_2] {$p_2$};
		\path[->]
		(q_1) 	edge [bend left]  node {(-1,-1)} (q_2)
		edge node {(0,0)} (p_1)
		(q_2) 	edge [bend left] node  {(-1,-1)} (q_1)
		edge node {(0,0)} (p_2)
		(p_1)	edge [bend left=80,above] node {$\frac12$, (-1,0)} (q_1)
		edge [bend right=80,below] node {$\frac12$, (0,1)} (q_1)
		(p_2)	edge [bend left=80,below] node {$\frac12$, (0,-1)} (q_2)
		edge [bend right=80,above] node {$\frac12$, (1,0)} (q_2);
		\end{tikzpicture}\hspace*{3em}
		\begin{tikzpicture}[shorten >=1pt,node distance=2.5cm,on grid,auto]
\tikzstyle{state2} = [rectangle,thick,draw=black,minimum size=6mm]
\node[state2] (q_1)   {$q_1$};
\node[state2] (q_2) [right=of q_1] {$q_2$};
\node[state] (p_1) [left=of q_1] {$p_1$};
\node[state] (p_2) [right=of q_2] {$p_2$};
\path[->]
(q_1) 	edge [bend left]  node {(-1,-1)} (q_2)
edge node {(0,0)} (p_1)
(q_2) 	edge [bend left] node  {(-1,-1)} (q_1)
edge node {(0,0)} (p_2)
(p_1)	edge [bend left=80,above] node {$\frac12$, (-2,0)} (q_1)
edge [bend right=80,below] node {$\frac12$, (0,1)} (q_1)
(p_2)	edge [bend left=80,below] node {$\frac12$, (0,-2)} (q_2)
edge [bend right=80,above] node {$\frac12$, (1,0)} (q_2);
\end{tikzpicture}}
\caption{VASS MDP $\A_1$ (left) and $\A_2$ (right). The states $p_1,p_2$ are probabilistic, and the states $q_1,q_2$ are nondeterministic.}
\label{fig-VASS-MDP}
\vspace{-0.3cm}
\end{figure}

Consider a purely non-deterministic VASS $\A$ of dimension~$d$. A cycle $p_0,\bu_1,p_1,\ldots,\bu_n,p_n$ of $\A$ is \emph{simple} if all $p_1,\ldots,p_{n-1}$ are pairwise different. The total effect of a simple cycle, i.e., the sum $\sum_{i=1}^n \bu_i$, is called an \emph{increment}. Clearly, there are only finitely many increments $\bi_1,\ldots,\bi_k$. In \cite{BCKNVZ:VASS-termination-LICS}, it was shown that the termination time of $\A$ is linear iff all increments are contained in an open half-space whose normal $\bw$ is strictly positive in every component. 
%The ``if'' direction is immediate, because every computation of length $m$ can be decomposed into $\Omega(m)$ simple cycles, and every simple cycle ``shifts'' the vector of counter values in the direction \emph{opposite} to $\bw$ by some fixed amount. Hence, if the computation starts in a configuration of size $n$, some counter inevitably becomes  negative after $\bigO(n)$ transitions. 
The ``if'' direction is immediate, relying on a straightforward ``ranking'' argument.
The ``only if'' part is more elaborate. In \cite{BCKNVZ:VASS-termination-LICS}, it was shown that if the increments are \emph{not} contained in an open half-space with positive normal, then for all sufficiently large $n$, there is a non-terminating computation initiated in $p\bn$ whose length is at least $n^2/c$ for some constant~$c$. This computation consists of simple cycles and auxiliary short paths used to ``switch'' from one control state to another.

Now let $\A$ be a VASS MDP with $d$ counters. Here, instead of simple cycles and their increments, we use the vectors of \emph{expected counter changes per transition induced by MD strategies in their BSCCs}. More precisely, for each of the finitely many MD strategies $\sigma$ and every BSCC $\B$ of the finite-state Markov chain $\A_\sigma$ obtained by ``applying'' $\sigma$ to $\A$, we consider the unique vector $\bi$ of expected counter changes per transition (note that~$\bi$ is the same for almost all infinite computations initiated in a state of $\B$). Thus, we obtain a finite set of vectors $\bi_1,\ldots,\bi_k$ together with the associated set of tuples $(\sigma_1,\B_1), \dots, (\sigma_k,\B_k)$ where each $\sigma_i$ is an MD strategy and $\B_i$ is a BSCC of $\sigma_i$ (note that we can have $\sigma_i=\sigma_j$ for $i \neq j$ since MD strategies might have multiple BSCCs). Similarly as in \cite{BCKNVZ:VASS-termination-LICS}, we check whether all $\bi_1,\ldots,\bi_k$ are contained in an open half-space whose normal $\bw$ is strictly positive in every component. This is achievable in polynomial time by using the results of \cite{BBCFK:MDP-multiple-MP-LMCS}. If such a $\bw$ exists, we can conclude $\term_d(n) \in \bigO(n)$. This is because the ``extremal'' vectors of expected counter changes per transition are obtained by MD strategies\footnote{Here we rely on well-known results about finite-state MDPs \cite{Puterman:book}.}, and hence the expected shift in the direction opposite to $\bw$ per transition stays bounded away from zero even for general strategies. 
%From this we can derive the existence of a suitable submartingale by employng the results of \cite{BBEK:OC-games-termination-approx} and show that, for an \emph{arbitrary} strategy of the environment, the expected number of transitions needed to achieve a unit ``shift'' in the direction opposite to $\bw$ is bounded by a fixed constant dending only on $\A$. Thus, we prove the first part of Theorem~\ref{thm-main}.
We than use a submartingale-based argument to show that the expected termination time is linear. This proves the first part of Theorem~\ref{thm-main}.

\begin{example}
	For the VASS MDP $\A_2$ of Fig.~\ref{fig-VASS-MDP}, there are three different increments $\bi_1 = (-1,\frac{1}{2})$, $\bi_2 = (\frac{1}{2},-1)$, and $\bi_3 = (-1,-1)$. Hence, we can choose $\bw = (1,1)$ as a positive normal satisfying $\bi_1 \cdot \bw < 0$, $\bi_2 \cdot \bw < 0$, and $\bi_3 \cdot \bw < 0$. For the VASS MDP $\A_1$ of Fig.~\ref{fig-VASS-MDP}, there are three different increments $\bi_1 = (-\frac{1}{2},\frac{1}{2})$, $\bi_2 = (\frac{1}{2},-\frac{1}{2})$, and $\bi_3=(-1,-1)$, hence no positive normal $\bw$ satisfying $\bi_1 \cdot \bw < 0$ and $\bi_2 \cdot \bw < 0$ exists.
\end{example}

Now suppose there is no such $\bw$. 
%We show there exist $\delta,c > 0$ such that $\calP_{p\bn}^\sigma [\Term \geq n^2/c] \geq \delta$ for all sufficiently large $n$, where $\sigma$ is a suitable strategy constructed for $p\bn$. This is the non-trivial part of our paper. 
Recall that for \emph{purely non-deterministic VASS}, a sufficiently long non-terminating computation initiated in $p\bn$ consisting of simple cycles and short ``switching'' paths was constructed in \cite{BCKNVZ:VASS-termination-LICS}. Since $\bi_1,\ldots,\bi_k$ are no longer effects of simple cycles or any fixed finite executions, it is not immediately clear how to proceed and we need to use new techniques. The arguments of \cite{BCKNVZ:VASS-termination-LICS} used to construct a sufficiently long non-terminating computation are \emph{purely geometric}, and they do not depend on the fact that increments are total effects of simple cycles. Hence, by using the same construction, we obtain a sufficiently long sequence of vectors consisting of $\bi_1,\ldots,\bi_k$ and some auxiliary elements representing switches between control states. We call this sequence a \emph{scheme}, because it does not correspond to any real computation of $\A$ in general. When the constructed scheme is initiated in $p\bn$, the resulting trajectory never crosses any axis. Also note that for every fixed $r \in \N$, we can create an extra $(r-1) \cdot n$ space between the trajectory and the axes by shifting the initial point from $p\bn$ to $p(r\cdot\bn)$, which does not influence our asymptotic bounds. Now, we analyze what happens if the constructed scheme is \emph{followed} from $p(r\cdot\bn)$. Here, following a vector $\bi_j$ means to execute the transition selected by $\sigma_j$, and following a ``switch'' from $p$ to $q$ means to execute a strategy which eventually reaches $q$ with probability one (we use a strategy minimizing the expected number steps needed to reach $q$). 
%Intuitively, $r\cdot\bn$ plus the \emph{expected} change in the counters after following the scheme for $\ell$ steps stays close to the point entered by the scheme after $\ell$ steps. 
Using concentration bounds of martingale theory, we show that the probability of all executions deviating from the scheme by more than $r\cdot n$ is bounded by $1-\delta$ for some fixed $\delta > 0$ (assuming $n$ is sufficiently large), which yields the $\term_d(n) \in \Omega(n^{2})$ lower bound of Theorem~\ref{thm-main}. The last part of Theorem~\ref{thm-main} is proven by a more detailed analysis of the established bounds.

Let us note that the underlying martingale analysis is not immediate, since the previous work which provides the basis for this analysis (such as \cite{BKK:pOC-time-LTL-martingale-JACM}) typically assume that the analysed strategies are memoryless in the underlying finite state space. In contrast, strategies arising of schemes are composed of multiple memoryless strategies, with the switching rules depending on the size of the initial configuration. Hence, we take a compositional approach, analysing each constituent strategy  separately using known techniques and composing the results via a new approach. 
\vspace{-0.25cm}

%The constructed scheme has a ``circular'' structure starting with simulating some of the MD strategies $\sigma_{i_1}$ (see above) for a certain number  of steps. Then, the simulation mode is ``switched'' to some $\sigma_{i_2}$, then to $\sigma_{i_3}$, and so on, until it is switched back to $\sigma_{i_1}$ and the whole process is repeated. The scheme is constructed so that the individual subpath obtained by simulating the \emph{same} $\sigma_i$ can be concatenated, and for each of these concatenated subpaths we analyze the  probability of deviating from the trajectory determined by the scheme. Note that if some configuration deviates from the trajectory by more than $\kappa$, it needs to deviate by more than $\kappa$ in some of its counters. Hence, the bound can be obtained by considering a $d$-tuple of \emph{one-counter} automata where only one counter of $\A$ is preserved and the others are abstracted. To bound the probability of these deviations, we use the results of \cite{BKK:pOC-time-LTL-martingale-JACM} where a suitable martingale was designed for one-counter automata.  By putting everything together, we obtain the desired bound on the probability of deviating from the scheme. 

\subsection{The Algorithm}
\label{sec-linear}

In this section we prove the first part of Theorem~\ref{thm-main}. Our analysis uses results on \emph{multi-mean-payoff} MDPs. Recall that if $\M$ is an MDP with transitions labelled by elements of $\R^d$ (for some dimension $d$), then a \emph{mean-payoff} of an infinite path  $\pi = p_0,\bu_1,p_1,\bu_2,\ldots$ of $\A$ is $\MP(\pi)  = \liminf_{n\rightarrow \infty} \frac{1}{n} \sum_{i=1}^n \bu_i$. 
%
%
%Let $\A = \ce{Q, (Q_n,Q_p),T,P}$ be a strongly connected VASS MDP. 
%Without restrictions, we may assume that $\A$ is strongly connected. Otherwise, $\A$ is split into maximal end components (MECs), i.e., maximal strongly connected subsets  $(Q',T') \subseteq (Q,T)$ such that each probabilistic/non-deterministic state in $Q'$ has all/some outgoing transitions in $T'$, and each MEC is considered separately as a strongly connected VASS MDP (the MECs of $\A$ are computable in polynomial time by standard algorithms). If all of these MECs satisfy $\term(n) \in \bigO(n)$, then the termination complexity of $\A$ is linear. Otherwise, it is at least quadratic.
%
%For every infinite path  $\pi = p_0,\bu_1,p_1,\bu_2,\ldots$ of $\A$, let
%$\MP(\pi)  = \liminf_{n\rightarrow \infty} \frac{1}{n} \sum_{i=1}^n \bu_i$. 
%Note that for every $p \in Q$, every strategy $\sigma$ determines a probability space over the set of infinite paths initiated in $p$ (in Section~\ref{sec-prelim}, we considered the probability space over all infinite computations initiated in a configuration $p\bv$; these probability spaces should not be confused, although they are defined similarly). We use $\calP_p^\sigma$ to denote the associated probability measure. Furthermore, we use $\Exp_p^\sigma[\MP]$ to denote the expected value of $\MP$ in this probability space. 
We say that a given vector $\bv$ is \emph{achievable} for $\A$ if there exist a strategy $\sigma$ and $p \in Q$ such that $\Exp_p^\sigma[\MP] \geq \bv$. Now we recall some results on mean-payoff MDPs \cite{BBCFK:MDP-multiple-MP-LMCS} used as tools in this section. 
 
\begin{description}
	\item[(a)] There is a finite set $\calR$ of vectors such that the set of all achievable vectors is precisely the set of all $\bv$ such that $\bv \leq \bu$ for some $\bu \in \calR^*$, where $\calR^*$ is the convex hull of $\calR$.
%	\item[(a)] All point of the Pareto curve are achievable (the Pareto curve consists of all minimal vectors $\bu$ such that $\bu \not< \bv$ for all achievable vectors $\bv$).
	\item[(b)] The problem whether a given rational $\bv$ is achievable is decidable in polynomial time.  
\end{description}

\noindent
Furthermore, we need the following result about finite-state MDPs.

%\begin{definition}
%	A \emph{Markov decision process (MDP)} is a one-dimensional VASS MDP $\M = \ce{Q, (Q_n,Q_p),T,P}$ where the counter updates are \emph{rational}, i.e., $T \subseteq Q \times \Q \times Q$.
%\end{definition}	

\begin{lemma}
	\label{lem-MDP-finite}
	Let $\M = \ce{Q, (Q_n,Q_p),T,P}$ be a strongly connected MDP with labels from $\Q$ such that 
\vspace{-2mm}
	\[
	   \sup\, \{\Exp_p^\sigma[\MP]  \mid \sigma \in \Sigma, p\in Q\}  \ = \ \kappa \ < \ 0 
\vspace{-2mm}
	\]
    Let $\Dec$ be a function assigning to every infinite path $\pi = p_0,u_1,p_1,u_2,\ldots$ of $\M$ the least $m$ such that $\sum_{i=1}^m u_i \leq -1$. If there is no such $m$, then $\Dec(\pi) = \infty$. Then there exists a constant $c$ depending only on $\M$ such that for every $p \in Q$ and $\sigma \in \Sigma$ we have that $\Exp_p^\sigma[\Dec] \leq c$.
\end{lemma}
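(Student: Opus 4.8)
The plan is to reduce the statement to a hitting-time estimate for a suitable supermartingale built from a \emph{bias} (potential) function. First I would invoke the theory of average-reward MDPs on the strongly connected, hence communicating, MDP $\M$: there exist a constant gain and a bias function $h\colon Q\to\R$ solving the average-reward optimality equations, and because $\M$ is communicating the optimal gain is constant over all states and equals $\kappa$. Writing $U=\max\{|\ell| : (p,\ell,q)\in T\}$, the pair $(\kappa,h)$ satisfies $\kappa + h(p) = \sum_{(p,\ell,q)\in T}P(p,\ell,q)\,(\ell+h(q))$ for $p\in Q_p$, and $\kappa + h(p)=\max_{(p,\ell,q)\in T}(\ell+h(q))$ for $p\in Q_n$. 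Since $Q$ is finite, $h$ is bounded, say $|h(q)|\le H$ for all $q\in Q$.

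Next, for an infinite path $\pi=p_0,u_1,p_1,u_2,\ldots$ put $S_m=\sum_{i=1}^m u_i$ (with $S_0=0$) and define $M_m = S_m + h(p_m) - m\kappa$. I would show $M_m$ is a supermartingale with respect to the natural filtration $\mathcal{F}_m$ generated by the first $m$ steps, \emph{under every} strategy $\sigma$ and every initial state $p_0$. Conditioning on $\mathcal{F}_m$, the optimality equation gives $\Exp[u_{m+1}+h(p_{m+1})\mid\mathcal{F}_m]=\kappa+h(p_m)$ when $p_m\in Q_p$, whereas when $p_m\in Q_n$ any distribution over the outgoing transitions prescribed by $\sigma$ yields $\Exp[u_{m+1}+h(p_{m+1})\mid\mathcal{F}_m]\le\max_{(p_m,\ell,q)}(\ell+h(q))=\kappa+h(p_m)$. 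In both cases $\Exp[M_{m+1}\mid\mathcal{F}_m]\le M_m$. The crucial point is that the maximum in the optimality equation makes the supermartingale inequality hold for all strategies simultaneously, not only the optimal ones.

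Finally, let $\tau=\Dec(\pi)$ be the first $m$ with $S_m\le -1$. Applying the optional stopping theorem to the stopped supermartingale $M_{m\wedge\tau}$ gives $\Exp[S_{m\wedge\tau}]+\Exp[h(p_{m\wedge\tau})]-\kappa\,\Exp[m\wedge\tau]\le M_0=h(p_0)$. Two observations control the stopped sum: for $m<\tau$ we have $S_m>-1$ by definition of $\tau$, and at the stopping step the last increment has magnitude at most $U$, so $S_\tau\ge -1-U$; hence $S_{m\wedge\tau}\ge -1-U$ always. Combining this with $|h|\le H$ and $-\kappa=|\kappa|>0$ yields $|\kappa|\,\Exp[m\wedge\tau]\le 2H+1+U$, i.e. $\Exp[m\wedge\tau]\le (2H+1+U)/|\kappa|$. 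Letting $m\to\infty$ and using monotone convergence ($m\wedge\tau\uparrow\tau$) gives $\Exp_p^\sigma[\Dec]=\Exp[\tau]\le c$ with $c:=(2H+1+U)/|\kappa|$, a constant depending only on $\M$.

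The main obstacle is the first step: producing the bias function $h$ and identifying the gain uniformly with $\kappa$ over all starting states. This is precisely the solvability of the average-reward optimality equations for communicating MDPs, which I would either cite from standard MDP theory~\cite{Puterman:book} or recover from the achievability description in items (a)--(b) above (in dimension one the attainable set $\calR^*$ is an interval whose supremum is $\kappa$). Once the uniform negative drift encoded by $(\kappa,h)$ is available, the supermartingale construction and the optional-stopping estimate are routine; the only mild care required is the bounded-overshoot bound $S_\tau\ge -1-U$, which guarantees that the stopped sum cannot undershoot $-1$ by more than a single transition's worth and thus keeps the right-hand side of the optional-stopping inequality finite.
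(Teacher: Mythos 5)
Your proof is correct, and its core construction coincides with the paper's, but the concluding probabilistic argument is genuinely different. The paper obtains the pair $(\bar{x},\bar{z})$ as an optimal solution of a linear program from~\cite{Puterman:book} (with minimal value $\bar{x}=\kappa$); this is exactly a (super)solution of the gain--bias optimality equations you invoke, and the paper forms the same process $m^{(i)} = C^{(i)} + \bar{z}_{S^{(i)}} - i\cdot\bar{x}$ that you call $M_m$, observing that it is a supermartingale under every strategy and from every initial state. Where you diverge is the final step. The paper does \emph{not} use optional stopping: it applies the Azuma--Hoeffding inequality to this bounded-difference supermartingale to show that $\calP_{p}^{\sigma}[\Dec = i] \leq a^i$ for some $a \in (0,1)$ and all $i \geq i_0$ (with $a, i_0$ depending only on $\M$), and then bounds $\Exp_p^\sigma[\Dec]$ by summing the resulting geometric tail. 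Your route --- stopping the supermartingale at $\tau = \Dec$, using the bounded overshoot $S_{m\wedge\tau} \geq -1-U$ together with $|h|\leq H$, and passing to the limit by monotone convergence --- is more elementary (no concentration inequality needed, and no case distinction on $i \geq i_0$) and yields an explicit constant $c = (2H+1+U)/|\kappa|$; it is also watertight as stated, since $m\wedge\tau$ is a bounded stopping time for each fixed $m$, so optional stopping applies without integrability caveats. What the paper's argument buys in exchange is strictly stronger information: an exponential tail bound on the distribution of $\Dec$, not merely a bound on its mean. For the lemma as stated, the expectation bound suffices, so both arguments are complete proofs; yours trades away the tail estimate for simplicity and an explicit constant.
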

%A proof of Lemma~\ref{lem-MDP-finite} can be found in Appendix~\ref{lemMDP-fin-proof}.

Now we show how to prove the first part of Theorem~\ref{thm-main} using the results above. Let $\A = \ce{Q, (Q_n,Q_p),T,P}$ be a strongly connected VASS MDP. For each of the finitely many MD strategies $\sigma$ we can consider a finite-state Markov chain $\A_\sigma$ obtained from $\A$ by fixing in every $q \in Q_n$ the probability of transitioning  to the unique successor specified by $\sigma(q)$ to 1. For each such $\A_\sigma$ and each its BSCC $\B$  we consider the unique vector $\bi$  defined by
\vspace{-2mm}
\[
    \bi \ = \ \sum_{p \in \B,\ p \tran{\bu} q} \eta(p) \cdot P(p \tran{\bu} q) \cdot \bu 
    \vspace{-2mm}
\]
where $\eta$ is the  invariant (stationary) distribution over the states of $\B$ (note that $\bi = \Exp_p^\sigma[\MP]$ for every $p \in \B$). Thus, we obtain a finite set of \emph{increments} $\bi_1,\ldots,\bi_k$ together with the associated MD strategies $\sigma_1,\ldots,\sigma_k$ and the BSCCs $\B_1,\ldots,\B_k$.

\begin{lemma}
\label{lem-kappa}
   If there exists a vector $\bw > \zv$ such that $\bi_j \cdot \bw < 0$ for every $1 \leq j \leq k$, then there exists $\kappa < 0$ such that $\bw \cdot \Exp_p^\sigma[\MP] \leq \kappa$ for every $p \in Q$ and $\sigma \in \Sigma$.  
\end{lemma}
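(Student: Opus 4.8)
The plan is to reduce the claim about arbitrary (history-dependent, randomized) strategies to the hypothesis on the finitely many increments by exploiting the structural description of achievable mean-payoff vectors recalled in~(a). First I would observe that for any fixed $p \in Q$ and $\sigma \in \Sigma$, the vector $\bv := \Exp_p^\sigma[\MP]$ is \emph{achievable} essentially by definition: the witnessing strategy and state are $\sigma$ and $p$ themselves, since $\Exp_p^\sigma[\MP] \geq \bv$ holds trivially. By~(a) there is therefore some $\bu \in \calR^*$ with $\bv \leq \bu$. Since $\bw > \zv$ is strictly positive in every component, the componentwise inequality $\bv \leq \bu$ is preserved under the scalar product with $\bw$, giving $\bw \cdot \bv \leq \bw \cdot \bu$. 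Writing $\bu$ as a convex combination $\bu = \sum_j \lambda_j \br_j$ of the finitely many generators $\br_j \in \calR$ and using linearity, we get $\bw \cdot \bu = \sum_j \lambda_j (\bw \cdot \br_j) \leq \max_{\br \in \calR} \bw \cdot \br$. Hence it suffices to show that $\bw \cdot \br < 0$ for every generator $\br \in \calR$: then $\kappa := \max_{\br \in \calR} \bw \cdot \br$ is a negative constant (a maximum of finitely many negative reals) bounding $\bw \cdot \Exp_p^\sigma[\MP]$ uniformly in $p$ and $\sigma$.

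The crux is therefore to identify the generators of the achievable region with the increments. Here I would invoke the characterization of achievable expected mean-payoff vectors in strongly connected MDPs from~\cite{BBCFK:MDP-multiple-MP-LMCS}, which allows the finite set $\calR$ of~(a) to be taken as $\{\bi_1, \ldots, \bi_k\}$, so that $\calR^*$ is the convex hull of the increments. Intuitively, this holds because (using strong connectivity of $\A$) the achievable long-run behaviour is governed by the polytope of long-run transition frequencies, whose extreme points are realized by MD strategies confined to a single BSCC; each such extreme point contributes exactly the mean-payoff $\bi_j = \Exp_p^{\sigma_j}[\MP]$ of the associated pair $(\sigma_j,\B_j)$, and every achievable vector is dominated by a convex combination of these. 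Concretely, each $\bi_j$ is itself achievable (by running $\sigma_j$ from a state of $\B_j$), so the downward closure of the convex hull of the increments is contained in the achievable set, and the cited characterization gives the reverse inclusion; the two sets thus coincide. With $\calR = \{\bi_1, \ldots, \bi_k\}$, the requirement ``$\bw \cdot \br < 0$ for all $\br \in \calR$'' is precisely the hypothesis $\bi_j \cdot \bw < 0$, completing the argument.

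The main obstacle I anticipate is exactly this last identification: result~(a) only asserts the \emph{existence} of some finite generating set $\calR$, and does not itself say that $\calR$ consists of the increments. Justifying that the increments generate the achievable region requires the finer structural result that, in a strongly connected MDP, the Pareto frontier of achievable expected mean-payoff vectors is spanned by the per-BSCC mean-payoffs of MD strategies --- the content of~\cite{BBCFK:MDP-multiple-MP-LMCS} together with the standard MDP theory~\cite{Puterman:book} guaranteeing that MD strategies suffice at the extreme points of the occupation-measure polytope. Everything else --- monotonicity of the scalar product under $\bw > \zv$, attainment of a linear maximum at a vertex of a polytope, and finiteness of $\calR$ --- is routine.
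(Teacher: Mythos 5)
Your proposal is correct, but it takes a genuinely different route from the paper. The paper scalarizes \emph{first}: it builds from $\A$ a one-dimensional mean-payoff MDP $\M$ whose rewards are $\bu \cdot \bw$, invokes the classical fact that single-objective mean payoff admits optimal MD strategies \cite{Puterman:book}, and observes that the value of such an MD strategy from any state is a convex combination of its per-BSCC values $\bi_j \cdot \bw$, each at most $\kappa = \max_j \bi_j \cdot \bw < 0$; the bound for arbitrary $\sigma$ then follows from MD-optimality in $\M$. You instead stay in the multi-dimensional setting and scalarize \emph{last}: you reduce everything to identifying the generating set $\calR$ of fact~(a) with the increments $\bi_1,\ldots,\bi_k$, i.e., to the statement that in a strongly connected MDP the achievable set is exactly the downward closure of the convex hull of the per-BSCC mean payoffs of MD strategies. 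You are right that this identification is the crux and that it is not supplied by (a)/(b) as recalled in the paper: from those abstract facts alone the argument would be circular (each generator of $\calR$ is itself achievable, but bounding $\bw \cdot \Exp_p^\sigma[\MP]$ for achievable vectors is precisely what is being proved). The identification is, however, a genuine consequence of the characterization in \cite{BBCFK:MDP-multiple-MP-LMCS} --- the extreme points of the occupation-measure polytope of a strongly connected MDP are realized by MD strategies confined to a single BSCC, combined with a Fatou-type argument --- so your proof stands. Comparing the two: the paper's argument is more elementary, needing only classical one-dimensional MDP theory, and is insensitive to what $\calR$ actually is; yours needs the heavier multi-objective result, but in exchange yields the sharper geometric picture ($\calR$ may be taken to be the increments), which also renders the subsequent lemma relating the nonexistence of $\bw$ to achievability of $\zv$ essentially immediate, and it avoids manipulating $\liminf$ under the scalar product, since the reward structure of $\A$ is never altered (the paper's claimed equality $\Exp_p^{\sigma,\M}[\MP] = \bw \cdot \Exp_p^{\sigma,\A}[\MP]$ holds in general only as an inequality, which suffices there but is a subtlety your route sidesteps entirely).
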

\begin{proof}
	Let $\kappa = \max\{\bi_j \cdot \bw \mid 1 \leq j \leq k\}$.
	Consider a $\Q$-labelled MDP $\M$ obtained from $\A$ by replacing each counter update vector $\bu$ with the number $\bu \cdot \bw$. Note that every strategy $\sigma$ for $\A$ can be seen as a strategy for $\M$, and vice versa. For a given $\sigma \in \Sigma$, we write $\Exp_p^{\sigma,\A}[\MP]$ and $\Exp_p^{\sigma,\M}[\MP]$ to denote the expected value of $\MP$ in $\A$ and $\M$, respectively. Note that for every $\sigma \in \Sigma$ we have that $\Exp_p^{\sigma,\M}[\MP] = \bw \cdot \Exp_p^{\sigma,\A}[\MP]$.
	
	For every $p \in Q$, there is an optimal MD strategy $\hat{\sigma}$ maximizing the expected mean payoff in $\M$. Since  $\Exp_p^{\hat{\sigma},\M}[\MP]$ is a convex combination of increments, we obtain $\Exp_p^{\hat{\sigma},\M}[\MP] \leq \kappa$. Now let $\sigma$ be an arbitrary strategy. Since $\Exp_p^{\sigma,\M}[\MP] \leq \Exp_p^{\hat{\sigma}}[\MP] \leq \kappa$, we obtain $\Exp_p^{\sigma,\M}[\MP] = \bw \cdot \Exp_p^{\sigma,\A}[\MP] \leq \kappa$.
	\qed
\end{proof}	
%\noindent
A direct corollary to Lemma~\ref{lem-MDP-finite} and Lemma~\ref{lem-kappa} is the following:

\begin{lemma}
	If there exists a vector $\bw > \zv$ such that $\bi_j \cdot \bw < 0$ for every $1 \leq j \leq k$, then 
%	the expected termination time of $\A$ is linear.
$\term_d(n) \in \mathcal{O}(n)$ holds for $\A$.
\end{lemma}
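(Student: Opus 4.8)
The plan is to combine Lemma~\ref{lem-kappa} and Lemma~\ref{lem-MDP-finite} through a phase-decomposition (ranking) argument, exploiting the strict positivity of $\bw$ to link the weighted counter value to termination.

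First I would pass from $\A$ to the $\Q$-labelled MDP $\M$ already used in the proof of Lemma~\ref{lem-kappa}, obtained by replacing every update vector $\bu$ with the scalar $\bu \cdot \bw$. Since $\A$ is strongly connected and $\M$ has the same underlying transition structure, $\M$ is strongly connected as well, and every strategy for $\A$ is a strategy for $\M$ and vice versa. By Lemma~\ref{lem-kappa} we have $\Exp_p^{\sigma,\M}[\MP] = \bw \cdot \Exp_p^{\sigma,\A}[\MP] \leq \kappa < 0$ for all $p$ and $\sigma$, so the hypothesis of Lemma~\ref{lem-MDP-finite} is met (the supremum of the expected mean payoff in $\M$ is some negative number). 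Lemma~\ref{lem-MDP-finite} then yields a constant $c$, depending only on $\M$ (hence only on $\A$ and $\bw$), such that from every state and under every strategy the expected number of steps needed for the weighted update sum to drop by at least $1$ is at most $c$.

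Next I would track the potential $\Phi_m = \bw \cdot \bv_m$ along a computation. Writing $C = \sum_i \bw(i) > 0$, any $p\bv$ with $\size{p\bv} = n$ satisfies $\Phi_0 = \bw \cdot \bv \leq C n$, since $\bv(i) \leq n$ and $\bw(i) > 0$ for each $i$. I would then define stopping times $\tau_0 = 0$ and, for $j \geq 0$, let $\tau_{j+1}$ be the first step after $\tau_j$ at which $\Phi$ has decreased by at least $1$ relative to its value at $\tau_j$; by construction $\tau_{j+1} - \tau_j$ equals $\Dec$ evaluated on the portion of the computation following $\tau_j$, so the uniform bound of Lemma~\ref{lem-MDP-finite} gives $\Exp[\tau_{j+1} - \tau_j \mid \mathcal{F}_{\tau_j}] \leq c$ regardless of the (history-dependent) strategy. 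By induction $\Phi_{\tau_j} \leq \Phi_0 - j \leq C n - j$. Because $\bw > \zv$ componentwise, whenever $\Phi_{\tau_j} < 0$ some counter must already be negative, i.e. the computation has terminated by step $\tau_j$; hence $\Term \leq \tau_N$ for $N = \lceil C n \rceil + 1$. Each increment $\tau_{j+1}-\tau_j$ has finite expectation and is therefore a.s.\ finite, so $\tau_N < \infty$ almost surely. Assembling the bound via linearity and the tower property, $\Exp_{p\bv}^\sigma[\tau_N] = \sum_{j=0}^{N-1} \Exp\!\big[\Exp[\tau_{j+1}-\tau_j \mid \mathcal{F}_{\tau_j}]\big] \leq N c$, and thus $\Exp_{p\bv}^\sigma[\Term] \leq N c = (\lceil C n\rceil + 1)\,c$. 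Since this holds uniformly over all $p\bv$ of size $n$ and all $\sigma$, taking the maximum over configurations and the supremum over strategies yields $\term_d(n) \leq (\lceil C n\rceil + 1)\,c \in \bigO(n)$.

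I expect the delicate step to be the phase-composition estimate $\Exp[\tau_{j+1} - \tau_j \mid \mathcal{F}_{\tau_j}] \leq c$: it requires that the residual process after the stopping time $\tau_j$ is again a computation of $\M$ driven by a (conditional) strategy, so that the bound of Lemma~\ref{lem-MDP-finite} — which is \emph{uniform} over states and strategies — applies verbatim, and that $\tau_N$ is almost surely finite so the tower-property summation is legitimate. The uniformity of $c$ over all states and all strategies is precisely what lets these per-phase estimates be glued into a single global linear bound, independently of how the adversary adapts its strategy across phases.
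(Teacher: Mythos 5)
Your proof is correct and takes essentially the same route as the paper: the paper states this lemma as a direct corollary of Lemma~\ref{lem-kappa} and Lemma~\ref{lem-MDP-finite}, and your stopping-time phase decomposition (each phase costing expected time at most $c$ by the uniformity of Lemma~\ref{lem-MDP-finite} over states and strategies, with $\mathcal{O}(n)$ phases before the potential $\bw \cdot \bv$ turns negative, which by $\bw > \zv$ forces some counter below zero) is precisely the intended way to assemble the two lemmas into the linear bound.
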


\noindent
The next lemma leads to a sound algorithm for proving of linear termination complexity.

\begin{lemma}
	The vector $\zv$ is achievable for $\A$ iff there is no $\bw > \zv$ such that $\bi_j \cdot \bw < 0$ for every $1 \leq j \leq k$.
\end{lemma}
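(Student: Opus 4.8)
The plan is to recast the lemma as a purely geometric statement about the increments and then settle it by a separating-hyperplane argument. First I would use the mean-payoff machinery to show that $\zv$ is achievable iff the convex hull $\mathrm{conv}(\{\bi_1,\dots,\bi_k\})$ meets the nonnegative orthant $\R^d_{\geq 0} := \{\bx \in \R^d \mid \bx \geq \zv\}$; equivalently, iff some convex combination $\bu = \sum_j \lambda_j \bi_j$ satisfies $\bu \geq \zv$. For the ``if'' part, each increment $\bi_j = \Exp^{\sigma_j}_p[\MP]$ is itself achievable because $\A$ is strongly connected (from any state reach the BSCC $\B_j$, then follow the MD strategy $\sigma_j$); by result (a) the achievable set equals $\calR^* - \R^d_{\geq 0}$, hence it is convex and downward closed, so it contains all of $\mathrm{conv}(\{\bi_1,\dots,\bi_k\})$ and with it $\zv$ whenever some combination dominates $\zv$. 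For the ``only if'' part I rely on the fact, from the theory of strongly connected multi-mean-payoff MDPs \cite{BBCFK:MDP-multiple-MP-LMCS}, that $\calR^*$ and $\mathrm{conv}(\{\bi_1,\dots,\bi_k\})$ have the same downward closure; thus any achievable vector, and in particular one that is $\geq \zv$, is dominated by a point of $\mathrm{conv}(\{\bi_1,\dots,\bi_k\})$, which then also lies in $\R^d_{\geq 0}$.

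It then remains to prove the equivalence: $\mathrm{conv}(\{\bi_1,\dots,\bi_k\}) \cap \R^d_{\geq 0} \neq \emptyset$ iff there is no $\bw > \zv$ with $\bi_j \cdot \bw < 0$ for all $j$. The routine direction is contrapositive-easy: if such a $\bw$ exists, then any $\bu = \sum_j \lambda_j \bi_j$ has $\bu \cdot \bw = \sum_j \lambda_j (\bi_j \cdot \bw) < 0$, whereas $\bu \geq \zv$ together with $\bw > \zv$ would force $\bu \cdot \bw \geq 0$; hence $\mathrm{conv}(\{\bi_1,\dots,\bi_k\})$ avoids the orthant.

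For the converse, suppose $\mathrm{conv}(\{\bi_1,\dots,\bi_k\}) \cap \R^d_{\geq 0} = \emptyset$. I would introduce the closed convex set $D = \mathrm{conv}(\{\bi_1,\dots,\bi_k\}) - \R^d_{\geq 0}$, which is closed as the sum of a compact set and a closed set, and observe $\zv \notin D$ (this is exactly the emptiness hypothesis). Strictly separating the point $\zv$ from the closed convex set $D$ yields a vector $\bw$ and a constant $c < 0$ with $\bw \cdot \bd \leq c$ for all $\bd \in D$. Because $D$ is stable under subtracting any nonnegative vector, for each coordinate $i$ I can take $\bd = \bi_1 - t\, e_i \in D$ and let $t \to \infty$: if $\bw(i) < 0$ then $\bw \cdot \bd \to +\infty$, contradicting $\bw \cdot \bd \leq c$, so $\bw(i) \geq 0$, giving $\bw \geq \zv$; and since each $\bi_j \in D$ we also get $\bi_j \cdot \bw \leq c < 0$.

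The one remaining issue is that separation only delivers $\bw \geq \zv$, whereas the lemma demands a strictly positive $\bw$, and this is the main obstacle. I would resolve it by a perturbation: replace $\bw$ by $\bw + \eps\, \mathbf{1}$, where $\mathbf{1}$ is the all-ones vector. Then $\bw + \eps\, \mathbf{1} > \zv$, and because each $\bi_j \cdot \bw \leq c < 0$ is bounded away from zero while each $\bi_j \cdot \mathbf{1}$ is bounded and there are only finitely many indices $j$, for a sufficiently small $\eps > 0$ we still have $\bi_j \cdot (\bw + \eps\, \mathbf{1}) < 0$ for every $j$. This produces the required strictly positive normal. The delicate point throughout is precisely this promotion from the weakly positive separator that the separation theorem guarantees to a strictly positive one, which crucially exploits the strict bound $c < 0$ afforded by separating a point from a \emph{closed} convex set.
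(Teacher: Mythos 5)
Your proof is correct and, at its core, rests on the same hyperplane-separation idea as the paper's; the differences lie in how the two directions are supported. For the direction ``$\zv$ achievable $\Rightarrow$ no positive normal'', the paper does not go through any Pareto-set characterization: it invokes Lemma~\ref{lem-kappa}, which scalarizes $\A$ by $\bw$ into a $\Q$-labelled MDP and uses optimality of MD strategies for \emph{one-dimensional} mean payoff to conclude $\bw\cdot\Exp_p^\sigma[\MP]\le\kappa<0$ for every $\sigma$, contradicting achievability of $\zv$. You instead import from \cite{BBCFK:MDP-multiple-MP-LMCS} the stronger multi-objective statement that the achievable set equals the downward closure of $\mathrm{conv}\{\bi_1,\dots,\bi_k\}$, i.e., that $\calR^*$ of claim~(a) and the hull of the increments have the same downward closure. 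That fact does hold for strongly connected MDPs (it is essentially Derman's theorem on extreme points of the invariant-measure polytope combined with a Fatou/liminf domination argument), but it is strictly more than what claim~(a) states, so this citation is the genuinely load-bearing step of your argument---everything after it is elementary convex geometry. In the converse direction, the paper separates the compact set $\calR^*$ from the nonnegative orthant and transfers the conclusion to the increments via $\bi_j\le\bv\in\calR^*$, whereas you separate the point $\zv$ from the closed Minkowski difference $\mathrm{conv}\{\bi_1,\dots,\bi_k\}-\R^d_{\geq 0}$; the two are interchangeable. Here your write-up is actually more careful than the paper's on the one delicate point: the paper asserts outright that the separation theorem yields a normal $\bw>\zv$, silently skipping the promotion from the weakly positive $\bw\ge\zv$ (forced by the recession cone of the orthant) to a strictly positive one; your perturbation $\bw+\eps\,\mathbf{1}$, justified by the uniform strict bound $\bi_j\cdot\bw\le c<0$ over the finitely many increments, is exactly the argument that fills this gap in both proofs.
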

\begin{proof}
	If $\zv$ is achievable, there exist $\sigma \in \Sigma$ and $p \in Q$ such that $\Exp_p^\sigma[\MP] \geq \zv$. Suppose there is $\bw > \zv$ such that $\bi_j \cdot \bw < 0$ for every $1 \leq j \leq k$. By Lemma~\ref{lem-kappa}, $\Exp_p^\sigma[\MP] \cdot \bw < 0$, which is a contradiction.
	
	Now suppose $\zv$ is not achievable. Consider the (convex and compact) set $\calR^*$ of claim~(a). Since $\zv$ is not achievable, the set $\calR^*$ has the empty intersection with the (convex) set of all vectors with non-negative components. By the hyperplane separation theorem, there exists a hyperplane with normal $\bw > \zv$ such that $\bv \cdot \bw < 0$ for all $\bv \in \calR^*$. Since every increment $\bi$ is achievable, there is $\bv  \in \calR^*$ such that $\bi \leq \bv$. Hence, $\bi \cdot \bw < 0$.
	\qed
\end{proof}

%
%\begin{lemma}
%	If $\zv$ is not achievable for $\A$, then the expected termination time of $\A$ is linear. If $\zv$ is achievable for $\A$, then there is no $\bw \succ \zv$ such that $\bi_j \cdot \bw < 0$ for every $1 \leq j \leq k$.
%\end{lemma}
%\begin{proof}
%	Suppose  $\zv$ is not achievable for $\A$. Then there exists a fixed $\kappa < 0$ such that, for every achievable $\bv$, some component $\bv_i$ satisfies $\bv_i \leq \kappa$ (if there is no such $\kappa$, then the Pareto curve contains a vector $\bu \geq \zv$ which is achievable by~(a)~above). Let $p \in Q$, $\sigma \in \Sigma$ a strategy in $\A$, and let $i$ be an index such that $\Exp_p^\sigma[\MP]_i \leq \kappa$
%	Consider an MDP $\M_i$ obtained from $\A$ by replacing each counter update vector $\bu$ with $\bu_i$. Due to Lemma~\ref{lem-MDP-finite}, there is a constant $c_i$ depending only on $\M_i$ such that $\Exp_p^\sigma[\Reach_\ell] \leq c_i\cdot \ell/\kappa$ for all sufficiently large~$\ell$. Note that $\sigma$ also satisfies $\Exp^{\sigma}_{p\bell}[\Term] \leq c_i\cdot \ell/\kappa$ for all sufficiently large $\ell$. Since there are only finitely many $c_i$'s, the expected termination time for $\A$ is linear. 
%	
%	If $\zv$ is achievable, there exist $\sigma \in \Sigma$ and $p \in Q$ such that $\Exp_p^\sigma[\MP] \geq \zv$. Suppose there is $\bw \succ \zv$ such that $\bi_j \cdot \bw < 0$ for every $1 \leq j \leq k$. By Lemma~\ref{lem-kappa}, $\Exp_p^\sigma[\MP] \cdot \bw < 0$, which is a contradiction.
%\end{proof}

Hence, to check linear termination complexity, our algorithm simply checks whether $\zv$ is achievable for $\A$. The previous lemma shows that this approach is sound. In the next subsection, we show that if there is no $\bw > \zv$ such that $\bi_j \cdot \bw < 0$ for every $1 \leq j \leq k$, then the expected termination time of $\A$ is at least quadratic. This shows that our algorithm is also complete, i.e. a decision procedure for linear termination of strongly connected demonic VASS MDPs.

\vspace{-0.25cm}

\subsection{Quadratic Lower Bound}
\label{sec-scheme}

For the rest of this section, we fix a strongly connected VASS MDP $\A = \ce{Q, (Q_n,Q_p),T,P}$. Let 
$\bi_1,\ldots,\bi_k$ be the increments, and $\sigma_1,\ldots,\sigma_k$ and $\B_1,\ldots,\B_k$ the associated MD strategies and BSCCs introduced in Section~\ref{sec-linear}. 

Suppose that there does \emph{not} exist a normal vector $\bw > \zv$ such that $\bi_i \cdot \bw < 0$ for every $1 \leq i \leq k$. By~\cite[Lemma~3.2]{BCKNVZ:VASS-termination-LICS}\footnote{Technically, Lemma~3.2 in \cite{BCKNVZ:VASS-termination-LICS} assumes $\bi_j \in \Z^d$ for every $1 \leq j \leq k$. Here, $\bi_j \in \Q^d$. We can  multiply all increments of by the least common multiple of all denominators and apply Lemma~3.2 afterwards.}, there exist a subset of increments $\bj_{1},\ldots,\bj_{\ell}$ and positive integer coefficients $a_1, \ldots, a_\ell$ such that $\sum_{i=1}^\ell a_i \bj_i \geq \zv$. We use this subset to construct a so-called \emph{scheme}.

\noindent
\textbf{Scheme}
The definition of a scheme is parameterized by a certain function $L : \N \rightarrow \N$. This function is defined later, for now it suffices to know that $L(n) \in \Theta(n)$. For every $n \in \N$, we define the \emph{scheme for $n$}, which is a concatenation of $L(n)$ identical \emph{$n$-cycles}, where each $n$-cycle is defined as follows:
\vspace{-2mm}
\[
   \underbrace{\bj_1,\ldots,\bj_1}_{L(n)\cdot a_1},s_1,
   \underbrace{\bj_2,\ldots,\bj_2}_{L(n)\cdot a_2},s_2,\ \cdots \ ,
   \underbrace{\bj_\ell,\ldots,\bj_\ell}_{L(n)\cdot a_{\ell}},s_{\ell}   
\vspace{-2mm}
\]
The subsequence $\bj_i,\ldots,\bj_i,s_i$ of the $j$-th cycle is called the \emph{$i$-th segment} of the $j$-th $n$-cycle. Since the length of each $n$-cycle is $\Theta(n)$, the length of the scheme for~$n$ is $\Theta(n^2)$. 

\begin{example}
	\label{exa-scheme}
	Recall the VASS MDP $\A_1$ of Fig.~\ref{fig-VASS-MDP}. Here, we put $\bj_1 = (-\frac{1}{2},\frac{1}{2})$, $\bj_2 = (\frac{1}{2},-\frac{1}{2})$, and $a_1 = a_2 = 1$. So, the cycle for $n$ is
\vspace{-2mm}
	\[ 
	  \underbrace{\textstyle(-\frac{1}{2},\frac{1}{2}),\ldots,(-\frac{1}{2},\frac{1}{2})}_{L(n)},s_1, 
	  \underbrace{\textstyle(\frac{1}{2},-\frac{1}{2}),\ldots,(\frac{1}{2},-\frac{1}{2})}_{L(n)},s_2
\vspace{-2mm}
	\] 
\end{example}

%The scheme for $n$ is defined similarly (though not completely identically) as the finite path of length $\Theta(n^2)$ in the proof of Theorem~3.1~(B) in \cite{BCKNVZ:VASS-termination-LICS}. 
Note that the scheme does \emph{not} necessarily correspond to any finite path in~$\A$, even if the switches are disregarded. However, the scheme for $n$ determines a unique strategy $\eta_{n}$ for $\A$ defined below.
\smallskip

\noindent
\textbf{From Schemes to Strategies.}
For every $p \in Q$, we fix an MD strategy $\gamma_p$ such that for every $q \in Q$, the $\calP_q^{\gamma_p}$ probability of visiting~$p$ from $q$ is equal to one. Furthermore, we fix some state $p_i \in \B_i$ for every $1 \leq i \leq \ell$. 
%Furthermore, we choose $\gamma_i$ so that the expected number of transitions needed to visit a state of $\B_{i+1}$ is minimal. 

For all finite paths that are \emph{not} initiated in $p_1$, the strategy $\eta_{n}$ is defined arbitrarily. Otherwise, $\eta_{n}$ starts by simulating the strategy $\sigma_1$ for precisely $L(n)\cdot a_1$ steps. Then, $\eta_{n}$ remembers the state $q^1_1$ in which the simulation of $\sigma_1$ ended, and changes to simulating $\gamma_{p_2}$ until the state $p_2$ of $\B_2$ is reached. After reaching $p_2$, the strategy $\eta_{n}$ simulates $\sigma_2$ for precisely $L(n)\cdot a_2$ steps. Then, it again remembers the final state $q^1_2$ and starts to simulate $\gamma_{p_3}$ until $p_3$ is reached, and so on, until the simulation of $\sigma_\ell$ corresponding to the $\ell$-th segment of the first $n$-cycle is completed. Then, $\eta_{n}$ starts to simulate the switch $s_\ell$ of the first $n$-cycle, i.e., the strategy $\gamma_{q_1^1}$. This completes the simulation of the first $n$-cycle. In general, the $j$-th $n$-cycle (for $2\leq j \leq L(n)$) is simulated in the same way, the only difference is that every switch  $s_i$ is simulated by $\gamma_{q_i^{j-1}}$ where $q_i^{j-1}$ is the state entered when terminating the simulation of $\sigma_{(i+1) \mod \ell}$ in the $(j{-}1)$-th $n$-cycle. This goes on until all $n$-cycles of the scheme are simulated. After that, $\eta_{n}$ behaves arbitrarily.

\smallskip
\noindent
\textbf{Lower Bound.}
We now show that the family of strategies $\{\eta_{n}\mid n\in \N \}$ witnesses the quadratic complexity. First we define $L(n)$. From standard results on MDPs~\cite{Puterman:book} we know that for every $p$, the expected number of steps we keep playing $\gamma_p$ before hitting $p$ is finite and dependent only on $\A$. Hence, there exists a constant $\xi$ depending only on $\A$ such that also the expected change
of every counter incurred while simulating $\gamma_p$ is bounded by $\xi$.
Now let $\minup = \min\{\bu(i) \mid (p,\bu,q) \in T\}$, i.e., $\minup$ is the minimal counter update over all transitions, and let
\vspace{-3mm}
\[
L(n) = \lfloor n\, /\, (\ell \cdot \xi - \sum_{j=1}^\ell a_j \cdot \minup+1)\rfloor.
\vspace{-3mm}
\]
The function $L(n)$ has been chosen so that, for all sufficiently large $n$, if the scheme for $n$ is ``executed'' from the point $\bn$, i.e., if we follow the vectors of the scheme, where each switch is replaced with the vector $(\xi,\ldots,\xi)$, then the resulting \emph{trajectory} never crosses any axis (recall that $\sum_{i=1}^\ell a_i \bj_i \geq \zv$). 

\begin{example}
	A trajectory for the scheme of Example~\ref{exa-scheme} is shown in Fig.~\ref{exa-trajectory}. Here, $\xi = -1$, because performing every switch takes just one transition with expected change of the counters equal to $(-1,-1)$. 
\end{example}

%\begin{figure}[t]
%	\begin{center}
%		\scalebox{0.2}{
%			\begin{tikzpicture}\tikzstyle{tr} = [very thick,->,>=stealth, big arrow]
%			\tkzInit[xmax=22,ymax=22,xmin=-2,ymin=-2]
%			%		\begin{scope}[dotted,very thin] % Grid
%			%		\tkzGrid\draw[step=1cm,gray,very thin] (-1.9,-1.9) grid (5.9,5.9);
%			%		\end{scope}
%			\begin{scope}[very thin] % Axes
%			\draw[-{Latex[length=3mm]},color = gray] (0,0) -- (21,0);
%			\draw[-{Latex[length=3mm]},color = gray] (0,0) -- (0,23);
%			\end{scope}
%			\draw[thin] (0,0) grid (20,22);
%			\foreach \x in {0,1,2,3,4,5,6,7,8,9,10}{
%				\draw (\x+\x,0) node[below,scale=3]{$\x$};
%				\draw (0,\x+\x) node[left,scale=3]{$\x$};
%			}
%			% Draw counter values of the planned path
%			
%			\foreach \k in {0,4,8}{
%				\draw[tr] (11-\k,21-\k) -- (9-\k,19-\k);
%				\draw[tr] (14-\k,14-\k) -- (12-\k,12-\k);   
%				\foreach \n in {0,...,4}{
%					\draw[tr] (16-\k-\n,16-\k+\n) -- (15-\k-\n,17-\k+\n);
%					\draw[tr] (13-\k-\n,15-\k+\n) -- (14-\k-\n,14-\k+\n);   
%			}}
%			
%			\end{tikzpicture}}
%		\vspace{-0.3cm}
%	\end{center}
%	\caption{A trajectory for the scheme of Example~\ref{exa-scheme}.}
%	\label{exa-trajectory}
%	\vspace{-0.3cm}
%\end{figure}

\begin{definition}
	Let $\pi = p_0, \bu_1, p_1, \bu_2, \ldots, p_j$ be a finite  alternating sequence of states and vectors of $\Q^d$ (not necessarily a finite path in $\A$), and $m \in \N$. We say that $\pi$ is \emph{$m$-safe} if, for every $1 \leq i \leq j$, we have that $\sum_{k=1}^i \bu_k \geq - \bm$. Furthermore, we say that an infinite sequence $\pi = p_0, \bu_1, p_1, \bu_2, \ldots$ is \emph{$m$ safe-until $k$} if its prefix $p_0, \bu_1, p_1, \bu_2, \ldots, p_k$ is $m$-safe.
\end{definition}

Now consider an infinite path $\pi = q_0,\bu_1,q_1,\bu_2,\ldots$ in $\A$ initiated in~$p_1$. Then almost all such $\pi$'s (w.r.t.{} the probability measure $\calP_{p_1}^{\eta_{n}}$) can be split into a concatenation of sub-paths
%\begin{eqnarray*}
% &  & \pi^1_1,\tau^1_1,\pi^1_2,\tau^1_2,\ldots,\pi^1_{\ell},\tau^1_{\ell},\\
% &  &     \pi^2_1,\tau^2_1,\pi^2_2,\tau^2_2,\ldots,\pi^2_{\ell},\tau^2_{\ell},\\
% &  & \quad \vdots\\
% &  &     
% \pi^{L(n)}_1,
% \tau^{L(n)}_1,
% \pi^{L(n)}_2,
% \tau^{L(n)}_2,\ldots,
% \pi^{L(n)}_{\ell},
% \tau^{L(n)}_{\ell}, 
% \hat{\pi}
%\end{eqnarray*}
\[
\pi^1_1,\tau^1_1,\ldots,\pi^1_{\ell},\tau^1_{\ell},\pi^2_1,\tau^2_1,\ldots,\pi^2_{\ell},\tau^2_{\ell}, \ldots \quad\ldots\pi^{L(n)}_1,
 \tau^{L(n)}_1,\ldots,\pi^{L(n)}_{\ell},
  \tau^{L(n)}_{\ell}, 
  \hat{\pi}
\]
where $\pi_i^{j}$ is a path with precisely $L(n) \cdot a_i$ transitions (resulting from simulation of $\sigma_i$), $\tau_i^j$ is a \emph{switching} path performing the switch $s_i$ of the $j$-th cycle, and $\hat{\pi}$ is the remaining infinite suffix of~$\pi$.
Note that for every $1 \leq i \leq \ell$, the paths $\pi^1_i, \pi^2_i,\ldots,\pi^{L(n)}_i$ can be concatenated and form a single path in $\A$ of length $L^2(n)$. This follows from the way of scheduling the switching strategies $\gamma_p$ in $\eta_{n}$. 
Writing $\pi = \varrho \odot \hat{\pi}$ (where $\hat{\pi}$ is the suffix of $\pi$ defined above), we denote by $\Sl(\pi)$ the length of $\varrho$. Note that $\Sl(\pi) \geq L^2(n)$ for almost all $\pi$. 

%For every $1 \leq k \leq \ell \cdot L(n)$, let $\sw(k) = \len(\pi_1^1 \oplus \tau_1^1 \oplus \cdots \oplus \pi_i^j \oplus \tau_i^j)$, where $\tau_i^j$ is the $k$-th switch, i.e., $(i-1)\cdot \ell +  j = k$. 
% We say that the path $\pi$ is \emph{$m$ safe until the $x$-th switch} is is $m$ safe until  for every $1 \leq k \leq \len(\pi^x)$, we have that $\sum_{i=1}^{k} \bu_i \geq - \bm$. Furthermore, we say that $\pi$ is \emph{$m$ safe} if $\pi$ is $m$ safe until the $(\ell \cdot L(n))$-th switch. 

%We show that there exists a suitable constant $r$ such that, for all sufficiently large $n$, the $\calP_{p_1}^{\eta_{n}}$ probability of all infinite paths initiated in $p_1$ that are $r\cdot n$ safe until $\Sl(\pi)$ is bounded away from zero. From this we immediately obtain  $\term(n) \in \Omega(n^2)$.  
%%The last part of Theorem~\ref{thm-main} follows by closer analysis of the established bounds. 
%More precisely, we prove the following:

We now focus on proving the following lemma:

\begin{lemma}
\label{lem-lower}
   For every $\delta > 0$ there exist $r,n_0 \in \N$ such that for all $n \geq n_0$, the $\calP_{p_1}^{\eta_{n}}$ probability of all infinite paths $\pi$ initiated in $p_1$ that are $r \cdot n$-safe until $\Sl(\pi)$ is at least~$1-\delta$. Moreover, the $n_0$ is independent of $\delta$.
\end{lemma}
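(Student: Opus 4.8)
The plan is to compare the random trajectory generated by $\eta_{n}$ with the deterministic \emph{idealized} trajectory obtained by executing the scheme from $\bn$ (following the mean vectors $\bj_i$ and replacing each switch by $(\xi,\dots,\xi)$), and to show that with probability at least $1-\delta$ the two never drift apart by more than $(r-1)\cdot n$ in any coordinate. Write $S_t = \sum_{k=1}^{t} \bu_k$ for the partial counter sum of the random path and $\bar S_t$ for the corresponding partial sum of the idealized trajectory. By the choice of $L(n)$ (and since $\sum_{i=1}^\ell a_i\bj_i \ge \zv$), for all $n$ above a threshold $n_0$ the idealized trajectory from $\bn$ never crosses an axis, i.e. $\bar S_t \ge -\bn$ for every $t$ up to the end of the scheme; this $n_0$ depends only on $\A$, not on $\delta$. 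Hence it suffices to prove that, for a suitable $r$, the deviation $D_t = S_t - \bar S_t$ satisfies $\sup_{t \le \Sl(\pi)} \|D_t\|_\infty \le (r-1)\cdot n$ with probability $\ge 1-\delta$: on this event $S_t(h) = \bar S_t(h) + D_t(h) \ge -n - (r-1)n = -rn$ for every coordinate $h$ and every $t \le \Sl(\pi)$, which is exactly $r\cdot n$-safety until $\Sl(\pi)$.

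To bound the deviation I would decompose $\pi$ into the $\ell$ segment-threads and the switching paths, exploiting the scheduling rule built into $\eta_{n}$: for each fixed $i$ the pieces $\pi_i^1, \pi_i^2, \dots, \pi_i^{L(n)}$ concatenate into a single genuine path of $\A$ of length $\Theta(n^2)$ following the memoryless strategy $\sigma_i$ inside the BSCC $\B_i$. On such a path the per-step increment has stationary mean $\bj_i$, so after subtracting the mean and correcting the transient via the bounded solution $g_i$ of the associated Poisson equation for $\A_{\sigma_i}$ (i.e. considering $S^{(i)}_t - t\,\bj_i - g_i(q_t) + g_i(q_0)$), the resulting vector process is a martingale with bounded increments. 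Azuma--Hoeffding together with Doob's maximal inequality then yields, for each coordinate, $\calP_{p_1}^{\eta_{n}}[\sup_t (\text{deviation along thread } i) > \tfrac{(r-1)n}{2\ell}] \le 2d\exp(-\Omega((r-1)^2))$, the implied constants depending only on $\A$. The exponent is \emph{independent of $n$}, because the thread length is $\Theta(n^2)$ while the deviation scale is $\Theta(n)$, so $\lambda^2/(\text{length}) = \Theta((r-1)^2 n^2 / n^2) = \Theta((r-1)^2)$; this is precisely what makes the final bound a function of $r$ alone and lets us keep $n_0$ independent of $\delta$ while pushing the failure probability below $\delta$ by taking $r$ large.

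It remains to control the $\ell \cdot L(n) = \Theta(n)$ switching paths. Each switch $s_i$ is realized by a strategy $\gamma_p$ whose duration is stochastically dominated by a geometric distribution (the standard hitting-time facts for finite MDPs already used to introduce $\xi$), so the number of steps and the counter change of a single switch have sub-exponential tails with expectation bounded by the constant $\xi$ already subtracted in $\bar S_t$. A union bound over the $\Theta(n)$ switches controls the maximal within-switch dip by $\bigO(\log n)$, and concentration for the sum of the $\Theta(n)$ centred switch contributions keeps their cumulative deviation at $\bigO(\sqrt{n})$; both are of lower order than $(r-1)n$ and fit inside a remaining budget of $\tfrac{(r-1)n}{2}$. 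Taking a union bound over the $\ell$ threads and the switch events gives total failure probability at most $\delta$ for $r$ chosen large enough, completing the proof. The main obstacle is exactly the non-memoryless character of $\eta_{n}$ flagged in the outline: the concentration tools of prior work assume a single memoryless strategy, so the crux is the compositional argument — using the scheduling trick to recover genuine $\sigma_i$-paths, taming the non-stationary start of each thread through the Poisson-equation corrector, and assembling the per-thread and per-switch estimates into a uniform-in-$n$, $r$-dependent tail bound that governs \emph{all} prefixes (not only thread endpoints) via the maximal inequality.
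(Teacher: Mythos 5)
Your proposal is correct and follows essentially the same route as the paper's proof: the paper likewise splits $\pi$ into the concatenated $\sigma_i$-threads (Lemma~\ref{lem-first-bound}) and the switching paths (Lemma~\ref{lem-seconf-bound}), and for each thread applies, coordinate-wise, the one-counter martingale of \cite{BKK:pOC-time-LTL-martingale-JACM} --- whose corrector vector $\by$ is precisely your Poisson-equation solution $g_i$, with a stopped (``frozen'') martingale playing the role of your Doob maximal inequality --- so that Azuma yields a failure probability $\exp(-\Omega((c-1)^2))$ independent of $n$ exactly for the reason you identify ($\Theta(n)$ deviation budget against $\Theta(n^2)$ thread length), which is what keeps $n_0$ independent of $\delta$. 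The only divergence is minor: for the switches the paper settles for a cruder Markov-inequality bound on the total switching length (yielding a dip of order $\delta^{-1}n$, with the $\delta$-dependent constant absorbed into $r$), rather than your sharper union-bound-plus-Bernstein estimates, but both arguments serve the same purpose.
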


The lemma guarantees that if the strategy $\eta_{n}$ is executed in a configuration $p_1 (r\cdot \bn)$, where $n\geq n_0$, then $\calP_{p_1 (r \cdot \bn)}^{\eta_{n}}[\Term \geq L(n)^{2}] \geq 1-\delta$. This implies $\term(n) \in \Omega(n^2)$. Hence, it remains to prove the lemma.

\smallskip
\noindent
\textbf{{Proof of Lemma~\ref{lem-lower}}.} We separately bound the probabilities of ``large counter deviations'' while simulating the $\sigma_i$'s and the switching strategies. To this end,
%Recall that almost all paths $\pi = q_0,\bu_1,q_1,\bu_2,\ldots$ in $\A$ initiated in~$p_1$ (when applying the strategy $\eta_{n}$) can be split into finite paths simulating the segments of the scheme for~$n$. 
for every $1 \leq i \leq \ell$ let $\pi_i = p_0,\bv_1,p_1,\bv_2,\ldots$  be the finite path of length $L^2(n)$ obtained by concatenating all $\pi^1_i, \pi^2_i,\ldots,\pi^{L(n)}_i$. Furthermore, let $\Ipath^i(\pi)$ the sequence obtained from $\pi_i$ by replacing every $\bv_k$ with $\bv_k - \bj_i$. Intuitively, $\Ipath^i(\pi)$ is $\pi_i$ where the transition effects are ``compensated'' by subtracting the expected change in the counter values per transition. 
%Hence, for larger and larger $n$, the sequence of vectors contained in $\Ipath^i(\pi)$ should ``converge'' to $\zv$, i.e., the path $\pi^i$ should not deviate much from the above discussed trajectory. More precisely, we prove the following:
We prove the following:
\begin{lemma}
	\label{lem-first-bound}
%	For %every $\varepsilon>0$ and 
%	every $\delta > 0$, there exist $c,n_0 \in \N$ such that for all $n \geq n_0$, the $\calP_{p_1}^{\eta_{n}}$ probability of all infinite paths $\pi$ initiated in $p_1$ such that $\Ipath^i(\pi)$ is $c\cdot n$ safe is at least $1 -\delta$. Moreover, the $n_0$ does not depend on $\delta$.
For %every $\varepsilon>0$ and 
	every $\delta > 0$, there exist $c,n_0 \in \N$ such that for all $n \geq n_0$ it holds $\calP_{p_1}^{\eta_{n}}(\{\pi\mid \Ipath^i(\pi) \text{ is $c\cdot n$-safe} \})\geq 1-\delta $. Moreover, the $n_0$ does not depend on $\delta$.
\end{lemma}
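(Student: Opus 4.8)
The plan is to recognize the compensated partial sums recorded by $\Ipath^i(\pi)$ as a martingale with bounded increments and apply a maximal concentration inequality, exploiting that the relevant path has length $\Theta(n^2)$ while we only need to prevent a downward deviation of order $\Theta(n)$.

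First I would argue that the concatenated path $\pi_i = \pi_i^1 \odot \cdots \odot \pi_i^{L(n)}$ is distributed exactly as a trajectory of length $L^2(n)$ of the finite-state Markov chain $\A_{\sigma_i}$ confined to its BSCC $\B_i$. This is the purpose of the seamless switching rule in the definition of $\eta_{n}$: each switch $s_i$ returns the computation to the precise state in which the preceding $\sigma_i$-segment terminated, so by the memorylessness of $\sigma_i$ and the strong Markov property the ``watched'' subprocess consisting only of the $\sigma_i$-steps evolves according to the transition kernel of $\A_{\sigma_i}$ restricted to $\B_i$, irrespective of the interleaved switching paths and the simulations of the other $\sigma_{i'}$. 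Consequently $\bj_i$ is exactly the stationary mean counter change per step along $\pi_i$, and $\Ipath^i(\pi)$ records the partial sums $S_m = \sum_{k=1}^m (\bv_k - \bj_i)$ of the centered updates. Being ``$c\cdot n$-safe'' is precisely the event that $S_m \geq -(c\cdot n)$ in every coordinate for all $m \leq L^2(n)$, so it suffices to bound, for each coordinate, the probability that the running minimum of $S_m$ drops below $-cn$.

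Next I would fix a coordinate $t$ and decompose $S_m(t)$ into a martingale plus a uniformly bounded correction. Writing the one-step update as its conditional mean $f(X_{k-1})$ plus a martingale difference, and solving the Poisson equation $(I-P)h = f - \bj_i$ on the finite irreducible chain $\B_i$, one obtains $S_m(t) = M_m(t) + R_m(t)$ where $M_m(t)$ is a martingale whose increments are bounded by a constant $D$ depending only on $\A$, and $|R_m(t)| \leq B$ for a constant $B$ depending only on $\A$. Applying the maximal version of the Azuma--Hoeffding inequality to $-M_m(t)$ over $N := L^2(n)$ steps gives $\calP(\min_{m \leq N} S_m(t) < -cn) \leq \exp\bigl(-(cn-B)^2/(2ND^2)\bigr)$, and a union bound over the $d$ coordinates yields the same estimate with an extra factor $d$.

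The key scaling observation is that $N = L^2(n) \leq A\, n^2$ for a constant $A$ (since $L(n) \in \Theta(n)$), so for $n \geq B$ and $c \geq 2$ the exponent is at least $(c-1)^2/(2AD^2)$, a quantity independent of $n$. Hence the whole bound is at most $d\cdot \exp(-(c-1)^2/(2AD^2))$, which can be driven below any prescribed $\delta$ by taking $c$ large enough; crucially, the threshold $n_0$ need only guarantee the structural facts used above ($L(n) \geq 1$, $N \leq A\,n^2$, and $n \geq B$), all of which depend solely on $\A$ and not on $\delta$, giving the claimed uniformity of $n_0$. I expect the main obstacle to be the rigorous justification of the first step — that interleaving the $\sigma_i$-simulation with switches and with the other strategies does not disturb the Markov-chain law of $\pi_i$ — together with verifying that the Poisson-equation correction term is bounded by a constant independent of $n$; once these are in place, the concentration estimate and the bookkeeping folding $\delta$ into $c$ are routine.
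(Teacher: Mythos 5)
Your proposal is correct and follows essentially the same route as the paper: reduce to a per-coordinate (one-counter) analysis of the concatenated path $\pi_i$, which by the seamless switching rule is a genuine trajectory of $\A_{\sigma_i}$ in $\B_i$; correct the centered sums by a bounded state-dependent term to get a bounded-increment martingale; and apply Azuma--Hoeffding, noting that the deviation $\Theta(n)$ against horizon $L^2(n)=\Theta(n^2)$ makes the exponent independent of $n$, so $\delta$ is absorbed into $c$ while $n_0$ stays $\delta$-free. The only cosmetic differences are that the paper imports the correction vector $\by$ from the probabilistic one-counter-automaton martingale of the cited prior work (rather than solving the Poisson equation directly) and uses a frozen/stopped martingale with the endpoint Azuma inequality instead of the maximal version.
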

In the proof of Lemma~\ref{lem-first-bound}, we use the martingale defined for stochastic one-counter automata in \cite{BKK:pOC-time-LTL-martingale-JACM}. Intuitively, if $\Ipath^i(\pi)$ is $n$ safe, then it must be $n$ safe in every counter. Hence, we can consider each counter one by one, abstract the other counters, and estimate the probability of being $n$ safe in each of these one-counter automata. 

Similarly, we need to estimate the probability of deviating from the trajectory by performing the switches. Let $\Spath(\pi)$ be the concatenation of all $\tau_i^j$ where $1 \leq i \leq \ell$ and $1 \leq j \leq L(n)$ preserving their order. We prove the following:  
   
\begin{lemma}
	\label{lem-seconf-bound}
%	For every $\delta > 0$ there exist $c,n_0 \in \N$ such that for all $n \geq n_0$, the $\calP_{p_1}^{\eta_{n}}$ probability of all infinite paths $\pi$ initiated in $p_1$ such that $\Spath(\pi)$ is $c \cdot n$ safe is at least $1 -\delta$. Moreover, the $n_0$ does not depend on $\delta$.
For %every $\varepsilon>0$ and 
every $\delta > 0$, there exist $c,n_0 \in \N$ such that for all $n \geq n_0$ it holds $\calP_{p_1}^{\eta_{n}}(\{\pi\mid \Spath^i(\pi) \text{ is $c\cdot n$-safe} \})\geq 1-\delta $. Moreover, the $n_0$ does not depend on $\delta$.
\end{lemma}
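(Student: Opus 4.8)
The plan is to exploit the feature that distinguishes the switching phase from the $\sigma_i$-simulation phase treated in Lemma~\ref{lem-first-bound}: the scheme for $n$ contains only $\ell\cdot L(n)\in\Theta(n)$ switches, and each switch $\tau_i^j$ is a run of a hitting strategy $\gamma_p$, whose length in a finite MDP has finite expectation and an exponentially decaying tail \cite{Puterman:book}. Hence the whole path $\Spath(\pi)$ consists of only $\Theta(n)$ transitions with overwhelming probability. Since a counter can move by at most a fixed constant per transition, over $\Theta(n)$ transitions it cannot move by more than $\Theta(n)$ at all, so --- in sharp contrast with the quadratically long phases of Lemma~\ref{lem-first-bound} --- no delicate martingale concentration is needed to control the \emph{size} of the deviation; the entire probabilistic content is the claim that the \emph{total length} of all switches stays linear in $n$.

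Concretely, I would first reduce to a single counter: $\Spath^i(\pi)$ is $c\cdot n$-safe iff it is $c\cdot n$-safe in each of the $d$ coordinates, so after a union bound (which only rescales $c$ and $\delta$ by the constant $d$) it suffices to bound one fixed coordinate. Let $b=\max\{|\bu(i)|\mid (p,\bu,q)\in T,\ 1\le i\le d\}$. In $\Spath^i(\pi)$, the switching path whose transition effects are compensated by subtracting their expected values (analogously to $\Ipath^i(\pi)$), each increment is a bounded update minus a bounded expectation, both at most $b$ in absolute value; hence after $m$ switch transitions every partial sum lies in the interval $[-2bm,\,2bm]$ \emph{deterministically}. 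Consequently, writing $N$ for the random total number of switch transitions, the sequence $\Spath^i(\pi)$ is automatically $c\cdot n$-safe whenever $N\le (c/2b)\cdot n$.

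It therefore remains to bound $N$. Each switch length has expectation at most a constant $t$ and a geometric tail that is uniform over its (history-dependent, random) starting state, as it is a reachability time in the finite MDP $\A$ under some $\gamma_p$ \cite{Puterman:book}; thus $\Exp_{p_1}^{\eta_{n}}[N]\le t\cdot\ell\cdot L(n)\le T\cdot n$ for a constant $T$ depending only on $\A$. A Chernoff-type (exponential-moment) bound over the $\ell\cdot L(n)$ switches then yields, for every $K>T$, a rate $\alpha(K)>0$ with $\calP_{p_1}^{\eta_{n}}[N>K\cdot n]\le \exp(-\alpha(K)\cdot n)$ for all sufficiently large $n$. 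Given $\delta>0$, I would fix $n_0$ (depending only on $\A$, so that $L(n)\ge 1$ and the tail bound is valid for $n\ge n_0$) and then choose $K$, hence $c=2bK$, large enough that $\exp(-\alpha(K)\cdot n_0)\le\delta$; since $\exp(-\alpha(K)\cdot n)$ is decreasing in $n$, the inequality $\calP_{p_1}^{\eta_{n}}[N>Kn]\le\delta$ holds for every $n\ge n_0$, and on its complement $\Spath^i(\pi)$ is $c\cdot n$-safe. As only $c$ (through $K$) depends on $\delta$, the threshold $n_0$ is independent of $\delta$, exactly as required.

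The main obstacle is the concentration bound on $N$. The summands --- the individual switch lengths --- are neither independent nor identically distributed: the strategy changes from $\gamma_{p_{i+1}}$ to $\gamma_{q_i^{j}}$ at random, data-dependent times, and each switch starts in the state $q_i^{j-1}$ in which the preceding $\sigma$-simulation happened to stop. The clean i.i.d.\ Chernoff bound is thus unavailable; instead one must establish a geometric tail for each switch length that is uniform in the starting state, and then combine these conditionally light-tailed, dependent summands through a supermartingale/exponential-moment argument that controls $N$ despite the dependence. This gluing of $\Theta(n)$ phases driven by distinct memoryless strategies is precisely the ``compositional'' ingredient anticipated in the outline, and it is the part that goes beyond a direct appeal to the one-counter martingale of \cite{BKK:pOC-time-LTL-martingale-JACM}.
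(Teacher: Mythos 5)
Your proposal is correct, and its first half coincides with the paper's own proof: both establish, via the uniform hitting-probability argument (from any state the target of $\gamma_p$ is reached within $|Q|-1$ steps with probability at least $x_{\min}^{|Q|-1}$), that each switch has expected length at most a constant $\lambda$, hence the total switching length $N$ satisfies $\Exp[N]\in\bigO(n)$; and both then convert a bound on $N$ into safety deterministically, since each transition changes each counter by at most a constant. (Two small remarks: $\Spath(\pi)$ is the plain concatenation of the switching paths, with no compensation by expected effects --- this only changes your constant $2b$ to $b$ --- and your union bound over coordinates is unnecessary, since the deterministic containment bounds all $d$ coordinates simultaneously.) Where you genuinely diverge is the concentration step. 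The paper simply applies Markov's inequality, $\calP[N \geq \delta^{-1}\Exp[N]] \leq \delta$, and sets $c = -\minup\,\delta^{-1}\lambda$; since the lemma permits $c$ to depend on $\delta$ (only $n_0$ must not), this already finishes the proof, and it needs nothing about the joint distribution of the switch lengths beyond linearity of expectation, which is immune to dependence. Your route instead aims for a Chernoff-type bound $\calP[N > Kn]\le\exp(-\alpha(K)\,n)$, which forces you to confront exactly the dependence issue you flag as the ``main obstacle'': the switch lengths are dependent and non-identically distributed, so you must prove uniform conditional geometric tails and compose them by an exponential-moment supermartingale argument. That argument does go through (the conditional moment generating function of each switch length is uniformly bounded by some $M(s)$ for small $s>0$, giving $\Exp[e^{sN}]\le M(s)^{\ell L(n)}$ and a rate $\alpha(K)=sK-\ell\log M(s)$ that grows with $K$, so your choice of $K$ as a function of $\delta$ with a fixed $n_0$ is sound), and it buys a strictly stronger conclusion --- exponentially small failure probability, with $c$ growing only logarithmically in $1/\delta$ rather than like $\delta^{-1}$ --- but it is machinery the statement does not require; the paper's Markov-inequality shortcut makes the obstacle you devote the most effort to disappear entirely.
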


Clearly, if $\Ipath^i(\pi)$ is $c_1\cdot n$-safe for all $1 \leq i \leq \ell$ and $\Spath(\pi)$ is $c_2\cdot n$-safe, then $\pi$ is $(c_1+c_2) \cdot (\ell{+}1)\cdot n$-safe until $\Sl(\pi)$. Hence, Lemma~\ref{lem-lower} is a simple consequence of Lemma~\ref{lem-first-bound} and Lemma~\ref{lem-seconf-bound}. 
%The proofs can be found in Appendix~\ref{app-proofs}.
%\vspace{-0.2cm}

\smallskip
\noindent
\textbf{Probability of Quadratic Behaviour.}
Now we indicate how to prove the last part of Theorem~\ref{thm-main}. % (the details can be found in Appendix~\ref{app-last})
Directly from Lemma~\ref{lem-lower}, we have that $\lim_{r \to \infty} \calP_{p_1 (r \cdot \bn)}^{\eta_{n}} [\Term \geq L(n)^{2}] = 1$. However, observe that if $r$ is not a fixed constant, we cannot say that the size of the initial configuration is linear in~$n$. Taking $r = n^{\gamma}$ for a suitable $\gamma > 0$, we may rewrite the limit in the following way: $\lim_{r \to \infty} \calP_{p_1 (r \cdot \bn)}^{\eta_{n}} [\Term \geq L(n)^{2}] = \lim_{n \to \infty} \calP_{p_1 (\bn^{1+\gamma})}^{\eta_{n}} [\Term \geq L(n)^{2}] = \lim_{n \to \infty} \calP_{p_1 \bn}^{\eta_{n^{1/(1+\gamma)}}} [\Term \geq L(n^{1/(1+\gamma)})^{2}]$. It can be shown that $L(n^{1/(1+\gamma)})^2 > n^{2-\eps}$, for every sufficiently large $n$, thus obtaining the last part of the Theorem~\ref{thm-main}.
\vspace{-0.2cm}

% The last part of Theorem~\ref{thm-main} follows by a detailed analysis of the bounds, and it can be also found in Appendix~\ref{app-proofs}. 

\subsection{Linearity of Angelic Termination Time}
\label{sec-angel}

For angelic nondeterminism, we have a similar result as in the demonic one.

\begin{theorem}\label{thm-main-angel}
The problem whether the expected angelic termination time of a given strongly connected VASS MDP $\A$  is linear is decidable in polynomial time. If the expected angelic termination time of $\A$ is not linear, then $\term_a(n) \in \Omega(n^2)$. Furthermore, for every $\epsilon > 0$ we have that
\vspace{-1mm}
  \[
     \lim_{n \rightarrow \infty} \ \inf_{p \in Q,\sigma \in \Sigma} \  \left\{\calP_{p\bn}^\sigma [\Term \geq n^{2-\varepsilon} ] \  \right\} \quad = \quad 1
\vspace{-1mm}
  \] 
\end{theorem}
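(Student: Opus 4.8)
The plan is to mirror the demonic development while exploiting the fact that, in the angelic setting, the controller is \emph{minimizing}, so it suffices for it to drive a \emph{single} counter below zero quickly. Concretely, for each coordinate $i \in \{1,\dots,d\}$ I would consider the single-objective mean-payoff MDP $\M_i$ obtained from $\A$ by keeping only the $i$-th component of every update, and compute the minimal achievable gain $\mu_i = \inf_{\sigma \in \Sigma,\, p \in Q} \Exp_p^\sigma[\MP]$ in $\M_i$ (i.e.\ ask whether the angel can make coordinate $i$ strictly decreasing on average). This is a classical average-reward MDP computation, solvable in polynomial time. The decision procedure then simply checks whether $\mu_i < 0$ for some $i$, and I claim that linearity of $\term_a$ is \emph{exactly} equivalent to this.

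For the \textbf{upper bound} (some $\mu_i<0$), fix an optimal MD strategy $\sigma^*$ attaining $\mu_i<0$ together with a bias (potential) function $h_i \colon Q \to \R$ from the average-reward optimality equation of $\M_i$. Writing $X_t(i)$ for the value of counter $i$ and $p_t$ for the state at time $t$, the process $X_t(i) + h_i(p_t) - \mu_i\, t$ is a martingale under $\sigma^*$ with bounded increments; since $\mu_i<0$, optional stopping shows that the expected first time $\tau_i$ at which counter $i$ becomes negative, started from $p\bn$, is $\bigO(n)$. As termination occurs no later than $\tau_i$, we get $\inf_{\sigma}\Exp_{p\bn}^\sigma[\Term] \in \bigO(n)$ for every $p\bn$, hence $\term_a(n)\in\bigO(n)$.

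For the \textbf{lower bound} (all $\mu_i \geq 0$) — which I expect to be the crux — the claim must hold against \emph{every} strategy, so unlike the demonic case I cannot construct one bad ``scheme'' and must instead give a uniform argument. Again I take the bias $h_i$ solving the minimization optimality equation of $\M_i$. Because the angel minimizes, the equation yields for every nondeterministic state $p$ and \emph{every} outgoing transition $(p,\bu,q)$ the inequality $\bu(i) + h_i(q) - h_i(p) \geq \mu_i \geq 0$, while at probabilistic states the expected one-step quantity equals $\mu_i$. Consequently $W_t^{(i)} = X_t(i) + h_i(p_t)$ is a \emph{submartingale under any strategy} with bounded increments. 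Since $h_i$ is bounded by some $H$ and $W_0^{(i)} \geq n - H$, counter $i$ becoming negative forces $W_t^{(i)}$ to drop by $\Omega(n)$; the maximal Azuma inequality bounds the probability of such a drop within $t$ steps by $\exp(-\Omega(n^2/t))$. Taking $t = n^{2-\eps}$ and a union bound over the $d$ counters gives $\inf_{p,\sigma}\calP_{p\bn}^\sigma[\Term \geq n^{2-\eps}] \to 1$, which is the ``furthermore'' clause; integrating the survival probability at horizon $t = \Theta(n^2)$ yields $\term_a(n)\in\Omega(n^2)$.

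The main obstacle is exactly this uniformity over strategies in the lower bound: a bare mean-payoff bound $\mu_i \geq 0$ controls only long-run averages and still permits transient negative drift, so the construction of the bias functions $h_i$ and the verification of the submartingale property (relying on the minimum being attained at nondeterministic states) is the essential technical ingredient. A secondary point to confirm is that the two regimes are genuinely complementary — that ``$\mu_i < 0$ for some $i$'' is the precise dividing line — which holds because in a strongly connected MDP the minimal achievable coordinate value is realized by an MD strategy from every state.
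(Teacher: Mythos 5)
Your proposal is correct and takes essentially the same route as the paper: the paper also projects onto each of the $d$ counters, characterizes linearity by the sign of the minimal achievable mean payoff of each one-counter projection (computed via an LP whose optimal solution $(\bar{x},(\bar{z}_q)_{q\in Q})$ is exactly your gain/bias pair $(\mu_i,h_i)$), and proves both lower-bound claims via the strategy-uniform stopped submartingale $C^{(i)}+\bar{z}_{S^{(i)}}-i\cdot\bar{x}$ together with Azuma's inequality and a union bound over the $d$ coordinates, taking horizons $n^2/c$ and $n^{2-\varepsilon}$. The only cosmetic difference is that the paper imports the linear upper bound and the submartingale construction from earlier work on one-counter MDPs instead of re-deriving them via the optimality equation and optional stopping, and its frozen-process definition plays the role of your ``maximal Azuma'' step.
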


\begin{proof}[Sketch]
We analyse each counter, i.e., we consider $d$ one-dimensional VASS MDPs obtained by projecting the labelling function of $\A$.

If it is possible to terminate in one of these one-dimensional VASS MDPs in expected linear time, then the corresponding strategy achieves linear termination also in $\A$. On the other hand, if this is not possible, then every one-counter has infinite angelic termination complexity. This \emph{does not} mean that the $\A$ has infinite angelic termination complexity. However, we show that there exists a constant $c >0$ such that for sufficiently large initial configuration, the probability of runs terminating before $n^2/c$ transitions is sufficiently small for every one-counter. By union bound, the probability of runs terminating before $n^2/c$ in $\A$ is $1-\delta$ for some $\delta>0$. Thus, $\term_a(n) \in \Omega(n^2)$. The last part of the theorem is proved similarly to the demonic case.
%Moreover, for every strategy and every $\eps > 0$, the probability of runs terminating before $n^{2-\eps}$ in any of the one-dimensional VASS-MDPs decreases to zero as $n$ goes to infinity. Therefore the probability of runs terminating before $n^{2-\eps}$ in $\A$ decreases to zero as $n$ goes to infinity.
\qed
\end{proof}
%\vspace{-0.4cm}

% !TeX root = main.tex
\section{General VASS MDPs \& Conclusion}
\label{sec:general}
We now drop the assumption that the VASS is strongly connected. Recall that an \emph{end-component} in an MDP is a set $\mec$ of states that is \emph{closed} (i.e., for $q\in Q_n \cap \mec$ at least one outgoing transition goes to $\mec$, while for $q\in Q_p \cap \mec$ all the outgoing transitions must end in $\mec$) and strongly connected. A \emph{maximal end component (MEC)} is an EC which is not contained in any larger EC. A decomposition of an MDP into MECs can be computed in polynomial time by standard algorithms~\cite{Alfaro:thesis}, and each MEC of a VASS MDP induces a strongly connected VASS sub-MDP which can be analyzed as shown in previous sections. We can construct a graph whose vertices correspond to MECs of an MDP and there is an edge from $\mec$ to some other $\mec'$ if and only if $\mec'$ is reachable from $\mec$. If the only cycles in this graph are self-loops, we say that the original MDP is \emph{DAG-like.} MECs corresponding to ``leafs'' of the graph (i.e. MECs that cannot be exited) are called \emph{bottom} MECs.

\begin{theorem}
\label{thm:general-main}
Theorem~\ref{thm-main} holds also for DAG-like VASS MDPs, while Theorem~\ref{thm-main-angel} holds for all VASS MDPs. In particular, a DAG-like VASS MDP $\A$ has $\term_d(n)\in\mathcal{O}(n)$ if and only if each MEC of $\A$ induces a (strongly connected) VASS MDP in which $\term_d(n)\in\mathcal{O}(n)$; and $\A$ has $\term_a(n)\in\mathcal{O}(n)$ iff each bottom MEC of $\A$ has $\term_a(n)\in\mathcal{O}(n)$. Otherwise, the termination complexity of $\A$ is in $\Omega(n^2)$. 
\end{theorem}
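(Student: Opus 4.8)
The plan is to reduce the DAG-like case to the strongly connected case (Theorem~\ref{thm-main}, Theorem~\ref{thm-main-angel}) by reasoning along the finite MEC-decomposition graph, whose only cycles are self-loops. First I would observe that any infinite computation, after some finite prefix, must eventually remain inside a single MEC forever: since the MEC graph is a DAG (modulo self-loops), a non-terminating run can change its current MEC only finitely often, and the expected number of transitions spent in the transient ``switching'' phases between MECs is bounded by a constant depending only on $\A$ (using the same finite-expected-hitting-time facts about finite-state MDPs from~\cite{Puterman:book} already invoked for the strategies $\gamma_p$). Thus $\Term$ decomposes, up to an additive $\bigO(1)$ term (and up to an additive $\bigO(n)$ term accounting for the counter level that can be reached from a configuration of size~$n$ after only linearly many steps), into the termination time accrued \emph{within} the MEC in which the run ultimately settles.

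\medskip
\noindent
\textbf{Demonic upper bound.}
For the ``if'' direction of the $\term_d$ characterization, suppose every MEC induces a strongly connected sub-VASS-MDP with linear demonic termination time. I would build a global ranking/submartingale argument by composing the per-MEC witnesses. Each MEC $M$ that is linear admits (by Lemma~\ref{lem-kappa} and Lemma~\ref{lem-MDP-finite}) a strictly positive normal $\bw_M$ with $\bi_j\cdot\bw_M<0$ for all increments of $M$, giving a local submartingale with a per-step drift bounded away from zero. The obstacle here is that the $\bw_M$ differ across MECs, so there is no single global Lyapunov function; I would instead stratify by MEC-depth in the DAG and argue inductively from the bottom MECs upward, adding to each level a sufficiently large constant multiple of the bound already established for all deeper MECs (reachable in $\bigO(1)$ expected switches from~$M$), so that the total expected time from a size-$n$ configuration telescopes to a linear bound. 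The ``only if'' direction is immediate: any single non-linear MEC is reachable, and by Theorem~\ref{thm-main} started inside it the expected time is $\Omega(n^2)$ already for configurations of size $n$ that are feasible to reach, so $\term_d(n)\in\Omega(n^2)$ for~$\A$.

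\medskip
\noindent
\textbf{Angelic case.}
For $\term_a$ the criterion changes to the \emph{bottom} MECs only, because the angelic (minimizing) player may steer the computation downward through the DAG toward whichever reachable MEC is most favourable. If some bottom MEC $M$ is linear, I would have the strategy first drive the run into $M$ (in expected $\bigO(n)$ steps, since reaching a target MEC can be done while changing counters by at most a linear amount) and then play the linear-time angelic strategy witnessed inside $M$ by Theorem~\ref{thm-main-angel}; this yields $\term_a(n)\in\bigO(n)$ globally. Conversely, if \emph{every} bottom MEC is non-linear, then no matter how the angelic player routes the computation it must ultimately enter some bottom MEC, and in each such MEC the quadratic lower bound of Theorem~\ref{thm-main-angel} applies; a union-bound/worst-case argument over the finitely many bottom MECs then forces $\term_a(n)\in\Omega(n^2)$.

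\medskip
\noindent
The step I expect to be the main obstacle is the demonic upper-bound composition: guaranteeing a single \emph{linear} bound despite having distinct, incompatible drift directions $\bw_M$ in different MECs. Care is needed to ensure that the inductive constant blow-up across DAG levels stays bounded (it is, since the DAG has fixed depth independent of~$n$) and that the demon cannot exploit the transitions between MECs to accumulate more than $\bigO(n)$ extra expected time before committing to a final MEC. The remaining parts—finiteness of switching phases and the lower bounds—follow directly by invoking the strongly connected theorems at a reachable MEC.
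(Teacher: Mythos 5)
Your demonic argument is essentially the paper's own proof: a per-MEC drift/supermartingale bound (via Lemma~\ref{lem-kappa} and Lemma~\ref{lem-MDP-finite}) showing that the expected time to terminate or leave a linear MEC, and the expected counter growth incurred meanwhile, are linear in the entry configuration; then induction on the depth (the paper's ``height'') of a MEC in the DAG, with the per-level constants compounding multiplicatively --- the paper makes this explicit as the bound $(\maxup\cdot r+1)^{i}\cdot n$ --- which is harmless because the depth is independent of~$n$. Your ``only if'' direction is also the paper's: since $\term_d$ maximizes over all initial configurations and every MEC is closed, one simply starts inside a non-linear MEC and replays its witness strategy from Theorem~\ref{thm-main}; no reachability argument is needed.

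The angelic half, however, contains a genuine error: you prove the wrong characterization. You claim $\term_a(n)\in\bigO(n)$ as soon as \emph{some} bottom MEC is linear, by driving the run into that MEC. But $\term_a$ is defined as a \emph{maximum over all initial configurations} of size~$n$. If $\A$ has two bottom MECs, one linear and one not, then an initial configuration placed inside the non-linear bottom MEC can never reach the linear one (bottom MECs cannot be exited), the angel is confined there, and Theorem~\ref{thm-main-angel} forces $\term_a(n)\in\Omega(n^2)$ for~$\A$. So ``some bottom MEC linear'' does not suffice; the correct criterion --- the one in the very statement you are proving --- is that \emph{each} bottom MEC is linear, and the upper bound is obtained by steering from an arbitrary initial state into some \emph{reachable} bottom MEC (always possible in expected constant time, affecting the counters by an expected constant) and playing its linear strategy. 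Your converse direction is likewise flawed as phrased: it is not true that the angel ``must ultimately enter some bottom MEC'' (it may stay in an intermediate MEC forever, or terminate long before reaching any bottom MEC); the lower bound should instead be obtained, as in the paper, simply by starting inside a non-linear bottom MEC, where confinement makes the quadratic bound apply verbatim. Finally, note that the theorem asserts the angelic result for \emph{all} VASS MDPs, not only DAG-like ones; the corrected argument never uses DAG-likeness, but yours, which relies on routing ``downward through the DAG,'' does.
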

\begin{proof}[Sketch]
We sketch the proof for the demonic case where there are no self-loops in the MEC graph. Then no MEC can be re-entered once left. Moreover, there is a constant $c$ s.t. whenever we enter a MEC with a counter valuation $\bv$, the expected time to either terminate or exit the MEC, as well as the expected size of the counter valuation at the time of termination/exiting are bounded by $c\cdot \size{\bv}$. Hence, a straightforward induction on the number of MECs shows that the expected maximal counter value as well as the expected termination time are bounded by $c^{|Q|}\cdot n$ from any initial configuration of size $n$. Since $|Q|$ does not depend on $n$, we get the result.
%
%The full proof can be found in Appendix~\ref{app-MEC}.
\end{proof}

\begin{figure}[t]
	\begin{minipage}[b]{0.45\linewidth}
%	\begin{figure}{h}
		\begin{center}
			\scalebox{0.18}{
				\begin{tikzpicture}\tikzstyle{tr} = [very thick,->,>=stealth, big arrow]
				\tkzInit[xmax=22,ymax=22,xmin=-2,ymin=-2]
				%		\begin{scope}[dotted,very thin] % Grid
				%		\tkzGrid\draw[step=1cm,gray,very thin] (-1.9,-1.9) grid (5.9,5.9);
				%		\end{scope}
				\begin{scope}[very thin] % Axes
				\draw[-{Latex[length=3mm]},color = gray] (0,0) -- (21,0);
				\draw[-{Latex[length=3mm]},color = gray] (0,0) -- (0,23);
				\end{scope}
				\draw[very thin] (0,0) grid (20,22);
				\foreach \x in {0,1,2,3,4,5,6,7,8,9,10}{
					\draw (\x+\x,0) node[below,scale=3]{\large $\x$};
					\draw (0,\x+\x) node[left,scale=3]{\large $\x$};
				}
				% Draw counter values of the planned path
				
				\foreach \k in {0,4,8}{
					\draw[tr] (11-\k,21-\k) -- (9-\k,19-\k);
					\draw[tr] (14-\k,14-\k) -- (12-\k,12-\k);   
					\foreach \n in {0,...,4}{
						\draw[tr] (16-\k-\n,16-\k+\n) -- (15-\k-\n,17-\k+\n);
						\draw[tr] (13-\k-\n,15-\k+\n) -- (14-\k-\n,14-\k+\n);   
				}}
				
				\end{tikzpicture}}
			\vspace{-0.2cm}
		\end{center}
		\caption{A trajectory for the scheme of Example~\ref{exa-scheme}.}
		\label{exa-trajectory}
%		\vspace{-0.3cm}
%	\end{figure}	
	\end{minipage}
\hfill
	\begin{minipage}[b]{0.45\linewidth}

%	\begin{figure}{h}
%\begin{figure}
	\scalebox{0.8}{%
		\begin{tikzpicture}[shorten >=1pt,node distance=2.5cm,on grid,auto]
		\tikzstyle{state2} = [rectangle,thick,draw=black,minimum size=6mm]
		\node[state2] (p_1) {$p_1$};
		\node[state, below right = 1cm and 1.5 cm of p_1] (r) {$r$};
		\node[state2, right = 3cm of p_1] (p_2) {$p_2$};
		\path[->] (p_1) 	edge [bend left]  node {(0,0)} (p_2);
		\path[->] (p_2) edge [bend left]  node {(0,0)} (r); 
		\path[->] (r) edge [bend left]  node {$\frac{1}{4}$,(0,0)} (p_1);
		\node[state2, below = 1.5cm of r] (f) {$f$};
		\path[->] (r) edge  node {$\frac{3}{4}$,(0,0)} (f);
		\draw[->, loop left, looseness = 10] (p_1) to node {$(2,-1)$} (p_1);
		\draw[->, loop right, looseness = 10] (p_2) to node {$(-1,2)$} (p_2);
		\draw[->, loop left, looseness = 10] (f) to node {$(0,-1)$} (f);
		\end{tikzpicture}
%		\begin{tikzpicture}[shorten >=1pt,node distance=2.5cm,on grid,auto]
%		\tikzstyle{state2} = [rectangle,thick,draw=black,minimum size=6mm]
%		
%		\node[state2] (q_1)   {$q_1$};
%		\node[state2] (q_2) [right=of q_1] {$q_2$};
%		\node[state] (p_1) [left=of q_1] {$p_1$};
%		\node[state] (p_2) [right=of q_2] {$p_2$};
%		\path[->]
%		(q_1) 	edge [bend left]  node {(-1,-1)} (q_2)
%		edge node {(0,0)} (p_1)
%		(q_2) 	edge [bend left] node  {(-1,-1)} (q_1)
%		edge node {(0,0)} (p_2)
%		(p_1)	edge [bend left=80,above] node {$\frac12$, (-1,0)} (q_1)
%		edge [bend right=80,below] node {$\frac12$, (0,1)} (q_1)
%		(p_2)	edge [bend left=80,below] node {$\frac12$, (0,-1)} (q_2)
%		edge [bend right=80,above] node {$\frac12$, (1,0)} (q_2);
%		\end{tikzpicture}\hspace*{3em}
%		\begin{tikzpicture}[shorten >=1pt,node distance=2.5cm,on grid,auto]
%		\tikzstyle{state2} = [rectangle,thick,draw=black,minimum size=6mm]
%		\node[state2] (q_1)   {$q_1$};
%		\node[state2] (q_2) [right=of q_1] {$q_2$};
%		\node[state] (p_1) [left=of q_1] {$p_1$};
%		\node[state] (p_2) [right=of q_2] {$p_2$};
%		\path[->]
%		(q_1) 	edge [bend left]  node {(-1,-1)} (q_2)
%		edge node {(0,0)} (p_1)
%		(q_2) 	edge [bend left] node  {(-1,-1)} (q_1)
%		edge node {(0,0)} (p_2)
%		(p_1)	edge [bend left=80,above] node {$\frac12$, (-2,0)} (q_1)
%		edge [bend right=80,below] node {$\frac12$, (0,1)} (q_1)
%		(p_2)	edge [bend left=80,below] node {$\frac12$, (0,-2)} (q_2)
%		edge [bend right=80,above] node {$\frac12$, (1,0)} (q_2);
%		\end{tikzpicture}}
%
}
	\caption{VASS MDP with linear MECs but infinite expected termination time.}
	\label{fig:nonscc}
%\end{figure}
%\end{figure}

\end{minipage}

\vspace{-0.3cm}
\end{figure}

For non-DAG-like VASS MDPs, the situation gets much more complicated. Consider the MDP in Figure~\ref{fig:nonscc}. There are three MECs, each a singleton ($\{p_1\}$, $\{p_2\}$, $\{f\}$). Clearly all these three MECs have a linear termination complexity. Now consider the following demonic strategy starting in configuration $p_1(0,n)$: select the loop until we get the configuration $p_1(2n,0)$; then transition to $p_2$ and play its loop until we get into $p_2(0,4n)$; then transition to $r$ and if the randomness takes us back to $p_1$, play the loop again until we get $p_1(8n,0)$, etc. \emph{ad infinitum}. Clearly, the strategy eventually ends up in $f$ where it terminates. However, the expected termination time is at least $\frac{3}{4}\sum_{i=0}^{\infty}(\frac{1}{4})^i \cdot 4^{i+1} = 3\sum_{i=0}^{\infty}(\frac{4}{4})^i = \infty.$ 

Hence, proving the linear termination complexity in general VASS does not reduce to analysing individual MECs. Moreover, it crucially depends on the concrete probabilities in transient (non-MEC) states: in Figure~\ref{fig:nonscc}, the termination time would be finite (and linear) if the transition from $r$ to $f$ had probability $<\frac{1}{4}$. The transient behaviour of MDPs can be of course rather complex and it is not even clear whether the linear demonic termination complexity is even \emph{decidable} for VASS MDPs with general structure. We see this as a very intriguing, yet complex, direction for future work.%%
%% Bibliography
%%

%% Please use bibtex, 

\bibliography{concur,str-long,PL,NewCite}

\begin{thebibliography}{10}
\providecommand{\url}[1]{\texttt{#1}}
\providecommand{\urlprefix}{URL }
\providecommand{\doi}[1]{https://doi.org/#1}

\bibitem{Alfaro:thesis}
de~Alfaro, L.: {Formal verification of probabilistic systems}. Phd. thesis,
  Stanford University, Stanford, CA, USA (1998)

\bibitem{conf/icalp/AminofRZS15}
Aminof, B., Rubin, S., Zuleger, F., Spegni, F.: Liveness of parameterized timed
  networks. In: Proceedings of {ICALP} 2015. pp. 375--387 (2015)

\bibitem{AH11:Yen}
Atig, M.F., Habermehl, P.: On yen's path logic for petri nets. International
  Journal of Foundations of Computer Science  \textbf{22}(04),  783--799 (2011)

\bibitem{BK:book}
Baier, C., Katoen, J.P.: Principles of Model Checking (2008)

\bibitem{BartheGGHS:16-diff-privacy-prob-couplings}
Barthe, G., Gaboardi, M., Gr{\'e}goire, B., Hsu, J., Strub, P.Y.: Proving
  differential privacy via probabilistic couplings. In: Proceedings of LICS'16.
  pp. 749--758. ACM, New York, NY, USA (2016)

\bibitem{Bloem16}
Bloem, R., Jacobs, S., Khalimov, A., Konnov, I., Rubin, S., Veith, H., Widder,
  J.: Decidability in parameterized verification. SIGACT News  \textbf{47}(2),
  53--64 (2016)

\bibitem{BG11:Vass}
Bozzelli, L., Ganty, P.: Complexity analysis of the backward coverability
  algorithm for vass. In: Proceedings of {RP} 2011. pp. 96--109 (2011)

\bibitem{BBCFK:MDP-multiple-MP-LMCS}
Br\'{a}zdil, T., Bro{\v{z}}ek, V., Chatterjee, K., Forejt, V., Ku{\v{c}}era,
  A.: Markov decision processes with multiple long-run average objectives
  \textbf{10}(1),  1--29 (2014)

\bibitem{BBEK:OC-games-termination-approx}
Br\'{a}zdil, T., Bro{\v{z}}ek, V., Etessami, K., Ku\v{c}era, A.: Approximating
  the termination value of one-counter {MDPs} and stochastic games. In:
  Proceedings of ICALP 2011, Part II. vol.~6756, pp. 332--343 (2011)

\bibitem{BCKNVZ:VASS-termination-LICS}
Br\'{a}zdil, T., Chatterjee, K., Ku{\v{c}}era, A., Novotn{\'{y}}, P., Velan,
  D., Zuleger, F.: Efficient algorithms for asymptotic bounds on termination
  time in {VASS}. In: Proceedings of LICS 2018. pp. 185--194 (2018)

\bibitem{BKK:pOC-time-LTL-martingale-JACM}
Br\'{a}zdil, T., Kiefer, S., Ku{\v{c}}era, A.: Efficient analysis of
  probabilistic programs with an unbounded counter. J.~ACM  \textbf{61}(6)
  (2014)

\bibitem{BKNW:OC-MDP-term-time}
Br\'{a}zdil, T., Ku{\v{c}}era, A., Novotn\'{y}, P., Wojtczak, D.: Minimizing
  expected termination time in one-counter {Markov} decision processes. In:
  Proceedings of ICALP 2012, Part II. vol.~7392, pp. 141--152 (2012)

\bibitem{Cassez:11-wcet-survey}
{Cassez}, F.: Timed games for computing {WCET} for pipelined processors with
  caches. In: 2011 Eleventh International Conference on Application of
  Concurrency to System Design. pp. 195--204 (June 2011)

\bibitem{CDHR10}
Chatterjee, K., Doyen, L., Henzinger, T.A., Raskin, J.F.: Generalized
  mean-payoff and energy games. In: Proceedings of {FSTTCS} 2010. pp. 505--516
  (2010)

\bibitem{CV13}
Chatterjee, K., Velner, Y.: Hyperplane separation technique for
  multidimensional mean-payoff games. In: Proceedings of {CONCUR} 2013. pp.
  500--515 (2013)

\bibitem{CFG17}
Chatterjee, K., Fu, H., Goharshady, A.K.: Non-polynomial worst-case analysis of
  recursive programs. In: CAV (2017)

\bibitem{ChatterjeeFuMurhekar:17-automated-recurrence}
Chatterjee, K., Fu, H., Murhekar, A.: Automated recurrence analysis for
  almost-linear expected-runtime bounds. In: Computer Aided Verification.
  Springer International Publishing (2017)

\bibitem{CJLS17}
Colcombet, T., Jurdzinski, M., Lazic, R., Schmitz, S.: Perfect half space
  games. In: Proceedings of {LICS} 2017. pp. 1--11 (2017)

\bibitem{DBLP:conf/stoc/CzerwinskiLLLM19}
Czerwinski, W., Lasota, S., Lazic, R., Leroux, J., Mazowiecki, F.: The
  reachability problem for petri nets is not elementary. In: Proceedings of
  {STOC} 2019. pp. 24--33 (2019)

\bibitem{Esparza:PN}
Esparza, J.: Decidability and complexity of petri net problems -- an
  introduction. Lectures on Petri nets I: Basic models pp. 374--428 (1998)

\bibitem{EKM:prob-PDA-PCTL-LMCS}
Esparza, J., Ku{\v{c}}era, A., Mayr, R.: Model-checking probabilistic pushdown
  automata  \textbf{2}(1:2),  1--31 (2006)

\bibitem{ELMMN14:SMT-coverability}
Esparza, J., Ledesma-Garza, R., Majumdar, R., Meyer, P., Niksic, F.: An
  smt-based approach to coverability analysis. In: Proceedings of {CAV} 2014.
  pp. 603--619 (2014)

\bibitem{EN94}
Esparza, J., Nielsen, M.: Decidability issues for petri nets -- a survey.
  Bulletin of the EATCS  \textbf{52},  245--262 (1994)

\bibitem{EY:RMC-LTL-complexity-TCL}
Etessami, K., Yannakakis, M.: Model checking of recursive probabilistic systems
   \textbf{13} (2012)

\bibitem{FosterKMRS:16-prob-netkat}
Foster, N., Kozen, D., Mamouras, K., Reitblatt, M., Silva, A.: Probabilistic
  netkat. In: Thiemann, P. (ed.) European Symposium on Programming. LNCS,
  vol.~9632, pp. 282--309. Springer Berlin Heidelberg, Berlin, Heidelberg
  (2016)

\bibitem{Ghahramani:15-prob-ai-nature}
Ghahramani, Z.: Probabilistic machine learning and artificial intelligence.
  Nature  \textbf{521}(7553),  452--459 (2015)

\bibitem{GulwaniMehraChulimbi:09-speed}
Gulwani, S., Mehra, K.K., Chilimbi, T.: Speed: Precise and efficient static
  estimation of program computational complexity. In: Proceedings of POPL'09.
  pp. 127--139. ACM, New York, NY, USA (2009)

\bibitem{journals/toplas/0002AH12}
Hoffmann, J., Aehlig, K., Hofmann, M.: Multivariate amortized resource
  analysis. {ACM} Trans. Program. Lang. Syst.  \textbf{34}(3),  14:1--14:62
  (2012)

\bibitem{JLS15}
Jurdzinski, M., Lazic, R., Schmitz, S.: Fixed-dimensional energy games are in
  pseudo-polynomial time. In: Proceedings of {ICALP} 2015. pp. 260--272 (2015)

\bibitem{KaminskiKatoenMateja:19-pp-hardness-act-inf}
Kaminski, B.L., Katoen, J., Matheja, C.: On the hardness of analyzing
  probabilistic programs. Acta Inf.  \textbf{56}(3),  255--285 (2019)

\bibitem{KaminskiKMO:18-wp-expected-runtime-jacm}
Kaminski, B.L., Katoen, J., Matheja, C., Olmedo, F.: Weakest precondition
  reasoning for expected runtimes of randomized algorithms. J. {ACM}
  \textbf{65}(5),  30:1--30:68 (2018)

\bibitem{KM69}
Karp, R.M., Miller, R.E.: Parallel program schemata. J. Comput. Syst. Sci.
  \textbf{3}(2),  147--195 (1969)

\bibitem{Kosaraju82:VASS-reach-dec}
Kosaraju, S.R.: Decidability of reachability in vector addition systems
  (preliminary version). In: Proceedings of {STOC} 1982. pp. 267--281. ACM
  (1982)

\bibitem{KS88}
Kosaraju, S.R., Sullivan, G.F.: Detecting cycles in dynamic graphs in
  polynomial time. In: Proceedings of {STOC} 1988. pp. 398--406 (1988)

\bibitem{Leroux:VASS-reachability-short}
Leroux, J.: Vector addition system reachability problem: A short self-contained
  proof. In: Proceedings of POPL 2011. pp. 307--316 (2011)

\bibitem{Leroux:VASS-polynomial-term-comp}
Leroux, J.: Polynomial vector addition systems with states. In: Proceedings of
  ICALP 2018. vol.~107, pp. 134:1--134:13 (2018)

\bibitem{DBLP:journals/corr/abs-1903-08575}
Leroux, J., Schmitz, S.: Reachability in vector addition systems is
  primitive-recursive in fixed dimension. In: Proceedings of {LICS} 2019 (2019)

\bibitem{Lipton:PN-Reachability}
Lipton, R.: The reachability problem requires exponential space. Technical
  report~62 (1976)

\bibitem{Mayr:PN-reachability}
Mayr, E.: An algorithm for the general {P}etri net reachability problem
  \textbf{13},  441--460 (1984)

\bibitem{NgoCarbonneauxHoffmann:18-pp-resource-analysis}
Ngo, V.C., Carbonneaux, Q., Hoffmann, J.: Bounded expectations: Resource
  analysis for probabilistic programs. In: Proceedings of PLDI'18. pp.
  496--512. ACM, New York, NY, USA (2018)

\bibitem{Puterman:book}
Puterman, M.: {Markov} Decision Processes (1994)

\bibitem{Rackoff:Covering-TCS}
Rackoff, C.: The covering and boundedness problems for vector addition systems.
  Theor. Comput. Sci.  \textbf{6},  223--231 (1978)

\bibitem{Schmitz16:hyperackermannian-complexity-hierarchy}
Schmitz, S.: Complexity hierarchies beyond elementary. ACM Trans. Comput.
  Theory  \textbf{8}(1),  3:1--3:36 (Feb 2016)

\bibitem{SZV14}
Sinn, M., Zuleger, F., Veith, H.: A simple and scalable static analysis for
  bound analysis and amortized complexity analysis. In: Proccedings of CAV
  2014. pp. 745--761 (2014)

\bibitem{ThrunBurgardFox:2005-robotics}
Thrun, S., Burgard, W., Fox, D.: Probabilistic Robotics (Intelligent Robotics
  and Autonomous Agents). The MIT Press (2005)

\bibitem{VCDHRR15}
Velner, Y., Chatterjee, K., Doyen, L., Henzinger, T.A., Rabinovich, A.M.,
  Raskin, J.: The complexity of multi-mean-payoff and multi-energy games. Inf.
  Comput.  \textbf{241},  177--196 (2015)

\bibitem{Wilhelm&al:2008:wcet-survey}
Wilhelm, R., Engblom, J., Ermedahl, A., Holsti, N., Thesing, S., Whalley, D.,
  Bernat, G., Ferdinand, C., Heckmann, R., Mitra, T., Mueller, F., Puaut, I.,
  Puschner, P., Staschulat, J., Stenstr\"{o}m, P.: The worst-case
  execution-time problem\&mdash;overview of methods and survey of tools. ACM
  Trans. Embed. Comput. Syst.  \textbf{7}(3),  36:1--36:53 (2008)

\bibitem{Yen92:Petri-Net-logic}
Yen, H.C.: A unified approach for deciding the existence of certain petri net
  paths. Inf. Comput.  \textbf{96}(1),  119--137 (1992)

\end{thebibliography}

\appendix
\newpage
\section{Proofs}
\label{app-proofs}

We start by recalling basic notions of martingale theory. A stochastic process $m^{(0)},m^{(1)},m^{(2)},\ldots$ is a \emph{martingale} if the following holds for all $i \in \N$:

\begin{itemize}
	\item $\Exp[m^{(i)}] < \infty$,
	\item $\Exp[m^{(i+1)} \mid m^{(i)},\ldots m^{(0)}] = m^{(i)}$. 
\end{itemize}

\noindent
By weakening the second condition into $\Exp[m^{(i+1)} \mid m^{(i)},\ldots m^{(0)}] \geq m^{(i)}$, we obtain a \emph{submartingale}. 

If $m^{(0)},m^{(1)},m^{(2)},\ldots$ is a (sub)martingale such that $|m^{(i+1)} - m^{(i)}| \leq d$ almost surely for all $i \in \N$, then the Azuma-Hoeffding inequality says that, for every $t>0$,

\begin{itemize}
	\item $\calP[m^{(i)} - m^{(0)} \geq t] \ \leq \ \exp(-t^2/2id^2)$\hspace*{1cm} if $m^{(0)},m^{(1)},m^{(2)},\ldots$ is a martingale,
	\item $\calP[m^{(i)} - m^{(0)} \leq -t] \ \leq \ \exp(-t^2/2id^2)$\hspace*{.75cm} if $m^{(0)},m^{(1)},m^{(2)},\ldots$ is a submartingale.
	
\end{itemize}

\subsection{A proof of Lemma~\ref{lem-MDP-finite}}
\label{lemMDP-fin-proof}

We start by recalling the results of \cite{Puterman:book}. Let $\calN = \ce{Q, (Q_n,Q_p),T,P}$ be a strongly connected $\Q$-labeled  MDP. Consider the following linear program:
\begin{align*}
\textbf{minimize } &x \text{ subject to}\\
z_q &\geq -x + c + z_p & \text{for all } q &\in Q_n \text{ and } (q,c,p) \in T\\
\textstyle z_q &\geq -x + \sum_{(q,c,p) \in T} P(q,c,p) \cdot (c + z_p) & \text{for all } q &\in Q_p 
\end{align*}

This linear program is feasible, and the minimal value of $x$ is equal to
\[
   \sup\, \{\Exp_p^\sigma[\MP]  \mid \sigma \in \Sigma, p\in Q\}
\]
Let $\bar{x},\bar{z}_p$ be the components of an optimal solution, and let $p \in Q$ be some fixed initial state. For every $i \in \N$, let  $S^{(i)}$ and $C^{(i)}$ be functions assigning to every infinite path $\pi = p_0,\bu_1,p_1,\bu_2,\ldots$ initiated in $p$ the state $p_i$ and the sum $\sum_{j=1}^i \bu_j$, respectively (we put $C^{(0)}(\pi) = 0$). Let $\sigma \in \Sigma$ be an arbitrary strategy.
Almost identical computation as in~\cite{BBEK:OC-games-termination-approx,BKNW:OC-MDP-term-time} gives that the stochastic process $m^{(0)},m^{(1)},\ldots$, where $m^{(i)} = C^{(i)} + \bar{z}_{S^{(i)}} - i\cdot \bar{x}$, is a supermartingale (over the probability space determined by $\sigma$). 

Our aim is to show that there exists $a \in (0,1)$ and $i_0 \in \N$ depending only on $\M$ such that for an arbitrary strategy $\sigma$, every $p \in Q$, and all $i \geq i_0$ we have that $\calP_{p}^{\sigma}[\Dec = i] \leq a^i$. From this we immediately obtain 
\[
 \Exp_p^\sigma[\Dec] \ =  \ \sum_{i=1}^\infty i \cdot \calP_{p}^{\sigma}[\Dec = i] \ \leq \ c
\]
 for some constant $c$ depending only on $\M$.

Now consider the supermartingale of the first paragraph applied to infinite paths in $\M$ (under the strategy $\sigma$). Let $\pi = p_0,\bu_1,p_1,\bu_2,\ldots$ be an infinite path such that $\Dec(\pi) = i$. Then $C^{(i)}(\pi) \geq -(1 + \delta)$ for some fixed $\delta$ depending only on $\M$, because the rewards are bounded. Furthermore, the maximal difference between $\bar{z}_p$ and $\bar{z}_q$ for $p,q \in Q$ is also bounded by some constant depending only on $\M$. Hence,  
$m^{(i)}(\pi) - m^{(0)}(\pi) \geq  \mu - i \cdot \bar{x}$ for some constant $\mu$ depending only on $\M$ (recall that $\bar{x}$ is \emph{negative}). For every $i \geq i_0$ where $i_0 := -2\mu/\bar{x}$, we obtain $\mu - i \cdot \bar{x} \leq i \cdot \bar{x}/2 - i \cdot \bar{x} = - (i \cdot \bar{x}/2)$ where $i \cdot \bar{x}/2 < 0$. Thus, we obtain
\[
   \calP_{p}^{\sigma}[\Dec = i] \ \leq \ \calP_{p}^{\sigma}[m^{(i)} - m^{(0)} \geq \mu - i \cdot \bar{x}] \ \leq \ \calP_{p}^{\sigma}[m^{(i)} - m^{(0)} \geq - (i \cdot \bar{x}/2) ]
\]
Since the supermartingale $m^{(0)},m^{(1)},\ldots$ can change in one step at most by a constant $\varrho$ (depending only on $\M$), applying Azuma's inequality yields
\[
   \calP_{p}^{\sigma}[m^{(i)} - m^{(0)} \geq - (i \cdot \bar{x}/2) ] \ \leq \ \exp \left(\frac{-i^2\cdot \bar{x}^2}{8 \cdot i \cdot \varrho^2}\right) \ = \ a^i
\] 
where $a = \exp(-\bar{x}^2/8\varrho^2) \in (0,1)$, for all~$i \geq i_0$.

\subsection{A proof of Lemma~\ref{lem-first-bound}}

%\begin{lemma}
%	For %every $\varepsilon>0$ and 
%	every $\delta > 0$, there exist $c,n_0 \in \N$ such that for all $n \geq n_0$, the $\calP_{p_1}^{\eta_{n}}$ probability of all infinite paths $\pi$ initiated in $p_1$ such that $\Ipath^i(\pi)$ is $c\cdot n$ safe is at least $1 -\delta$.
%\end{lemma}

We assume a fixed $1 \leq i \leq \ell$. Recall the strategy $\sigma_i$ and the BSCC $\B_i$ associated to the increment $\bj_i$ (see Section~\ref{sec-linear}).  For every path $\pi$ initiated in $p_1$, let $\Ipath^i(\pi) = q_0,\bv_1,q_1,\bv_2, \dots, q_{L^2(n)}$, and for every $1 \leq j \leq d$, consider the sequence $\Ipath^i_j(\pi)$ obtained by projecting every $\bv_k$ to its $j$-th component, i.e., $\Ipath^i_j(\pi) = q_0,\bv_1(j),q_1,\bv_2(j),\dots, q_{L^2(n)}$. 

Our aim is to show that, for every $\delta > 0$ and every $1 \leq j \leq d$, there exist $c,n_0 \in \N$ such that the $\calP_{p_1}^{\sigma_i}$ probability of all $\pi$ initiated in $p_1$ such that $\Ipath^i_j(\pi)$ is $c \cdot n$ safe is at least $1-\delta$. Observe that Lemma~\ref{lem-first-bound} is a direct consequence of this claim.

Observe that $\B_i$, where the nondeterministic choice is resolved by $\sigma_i$, and the counter update vectors are projected to their $j$th component, can be seen as a one-counter automaton. The long-run average change of the counter per transition in this automaton is $\bj_i(j)$. Recall that $\Ipath^i(\pi)$ was obtained from the concatenated paths $\pi^1_i, \pi^2_i,\ldots,\pi^{L(n)}_i$ by subtracting the increment $\bj_i$ from each vector occurring in this sequence. Hence, we need to subtract $\bj_i(j)$ from every counter update on every transition of $\B_i$. Thus, we obtain a one-counter automaton $\hat{\B}_i$. A trivial but crucial observation is that the long-run average change of the counter per transition in $\hat{\B}_i$ is \emph{zero}.

The $\calP_{p_1}^{\sigma_i}$ probability of all $\pi$ initiated in $p_1$ such that $\Ipath^i_j(\pi)$ is \emph{not} $c \cdot n$ safe is equal to the probability that a run of $\hat{\B}_i$ initiated in $q(0)$, where $q$ is the starting state of $\pi^1_i$, decreases the counter to $-c\cdot n$ or below during the first $L^2(n)$ transitions. An upper bound on the latter probability can be established using the martingale for probabilistic one-counter automata introduced in \cite{BKK:pOC-time-LTL-martingale-JACM} (here we use a slightly modified version of this martingale which better suits our purposes). Due to \cite{BKK:pOC-time-LTL-martingale-JACM}, for every state $p^{(0)}$ of $\hat{\B}_i$ and every $c \in \N$, there exists a vector $\by \in [0,\infty)^{|\hat{\B}_i|}$ such that the stochastic process defined by
\[
	m^{(k)}_{i,j} = \begin{cases}
		C^{(k)} + \by_j(p^{(k)})		& \text{if } C^{(k)} \geq -cn \text{ for all } 0 \leq k' < k;\\
		m^{(k-1)}_{i,j} 				& \text{otherwise}
	\end{cases}
\]
is a martingale, where $C^{(0)} = 0$, $C^{(k)} = \sum_{s=1}^k \bv_s(j)$ is a random variable returning the accumulated counter change after $k$ steps, and $p^{(k)}$ is a random variable returning the control state entered after $k$ steps. Moreover, the vector $\by_j$ satisfies $0 \leq \|\by_j\| \leq 2|\hat{\B}_i|/x_{\min}^{|\hat{\B}_i|}$, where $x_{\min}$ is the minimum probability used in the transitions of $\hat{\B}_i$.

Note that if the accumulated counter change drops to $-c\cdot n$ or below for the first time after exactly $k$ transitions, the martingale does not change its value from this point on, and remains equal to $m^{(k)}_{i,j}$. Let $\by = \max_j \|\by_j\|$. If $m^{(L^2(n))}_{i,j} \geq -cn + \|\by\|$ then the value $C^{(L^2(n))}$ is at least $-cn$ for every $k \leq L^2(n)$. Hence, the probability that a run of $\hat{\B}_i$ initiated in $p_i(0)$ decreases the counter to $-cn$ or below during the first $L^2(n)$ transitions is less or equal to $P(m^{(L^2(n))}_{i,j} \leq -cn +  \|\by\| + 1)$. Furthermore, for all $n \geq \|\by\| +1 + m^{(0)}_{i,j}$ we have that
\[
P(m^{(L^2(n))}_{i,j} \leq -cn +  \|\by\| + 1)  \ \leq \ 
P(m^{(L^2(n))}_{i,j} - m^{(0)}_{i,j} \leq -(c-1)n) \ \leq \ \mathrm{exp}\left( -(c-1)^2/ \alpha \right)
\]
by applying Azuma's inequality, where $\alpha$ is a suitable constant dependent only on $\hat{\B}_i$.

The above holds for every $j \in \{1,\ldots,d\}$. Hence, the probability of all $\calP_{p_1}^{\sigma_i}$ probability of all $\pi$ initiated in $p_1$ such that $\Ipath^i(\pi)$ is \emph{not} $c \cdot n$ safe is less or equal to $d \cdot \mathrm{exp}\left( -(c-1)^2/ \alpha \right)$ for every sufficiently large $n$. To achieve  $d \cdot \mathrm{exp}\left( - (c-1)^2/\alpha\right) \leq \delta$, we can put
$c = \lceil \sqrt{\alpha(\ln d - \ln \delta)} \rceil + 1$.

%
%
%
%
% If this holds for every $j \in \{1,\dots,d\}$, then $\Ipath^{i}(\pi)$ is $cn$ safe. The probability of the paths that are not $cn$ safe can be bounded from above using Azuma inequality in the following way
%\begin{align*}
%\sum_{j=1}^d  P(m^{(n^2)}_{i,j} \leq -cn +  \|\by\| + 1) 
%&\leq  d \cdot P(m^{(k)}_{i,j} - m^{(0)}_{i,j} \leq -(c-1)n)\\
%& \leq d \cdot \mathrm{exp}\left( -(c-1)^2n^2/ \alpha n^2 \right)\\
%& = d \cdot \mathrm{exp}\left( - (c-1)^2/\alpha\right),
%\end{align*}
%for every $n$ such that $\|\by\| +1 \leq n$, where $\alpha$ is a constant dependant only on $\A$. We want to show that for some $c \in \N$ we have
%\[
%d \cdot \mathrm{exp}\left( - (c-1)^2/\alpha\right) \leq \delta.
%\]
%We may take for example $c = \lceil \sqrt{\alpha(\ln d - \ln \delta)} \rceil + 1$.

\subsection{A proof of Lemma~\ref{lem-seconf-bound}}

First, we bound the expected number of transitions used in executing one switch. Let $x_{\min}$ be the minimum probability appearing in the VASS MDP, and let $p,q \in Q$. There is a path of length at most $|Q|-1$ which we may follow with probability at least $x_{\min}^{|Q|-1}$. If successful, we are done, otherwise we end up in some state $p'$ and again there is some path from $p'$ to $q$ of length at most $|Q|-1$ and the probability of traversing this path is still at least $x_{\min}^{|Q|-1}$. 

If we use the number $x_{\min}^{|Q|-1}$ as the (lower bound on the) probability of success, the random variable counting the number of attempts until the first success has a geometric distribution. Its expected value is then $1/x_{\min}^{(|Q|-1)}$.
Since every attempt uses at most $|Q|-1$ transitions, the expected number of used transitions is bounded from above by $\lambda = (|Q|-1) \cdot 1/x_{\min}^{(|Q|-1)}$.

Let $X$ be the random variable equal to the length of $\Spath(\pi)$. Since the number of switches is $L(n)$, the expected value of $\Spath(\pi)$ is bounded from above by $\lambda \cdot L(n)$. Let $\delta > 0$. Surely
$P(X \geq \delta^{-1} \cdot \lambda \cdot L(n)) \leq P(X \geq \delta^{-1} \Exp[X])$. By Markov inequality, we obtain that $P(X \geq \delta^{-1} \Exp[X]) \leq \delta$. Therefore, with probability at most $\delta$, we use more than $\delta^{-1} \cdot \lambda L(n)$ transitions.

The minimal update over all transitions is $\minup$. If $\minup \geq 0$, then $c = 0$ since no transition can decrease the counters and the set of paths $\pi$ such that $\Spath(\pi)$ is 0 safe has probability one. For $\minup < 0$, the set of paths $\pi$ such that $\Spath(\pi)$ is $-\minup \delta^{-1} \lambda \cdot L(n)$ safe has probability at least $1-\delta$. Since $L(n) \leq n$, taking $c = -\minup \delta^{-1} \lambda$ completes the proof.

\subsection{A proof of the last part of Theorem~\ref{thm-main}}
\label{app-last}
We show that for every $\eps>0$ we can choose $\gamma >0$ such that
\[
 \lim_{n \to \infty} \calP_{p\bn}^{\eta_{\bn^{1/1+\gamma}}} [\Term \geq n^{2-\eps}] = 1.
\] 

From Lemma~\ref{lem-lower}, we have that $\lim_{r \to \infty} \calP_{p_1 (r \cdot \bn)}^{\eta_{n}} [\Term \geq L(n)^{2}] = 1$.
Rewriting the limit, we obtain
\begin{align*}
\lim_{r \to \infty} \calP_{p_1 (r \cdot \bn)}^{\eta_{n}} [\Term \geq L(n)^{2}] &= \lim_{n \to \infty} \calP_{p_1 (\bn^{1+\gamma})}^{\eta_{n}} [\Term \geq L(n)^{2}] \\
&= \lim_{n \to \infty} \calP_{p_1 \bn}^{\eta_{n^{1/(1+\gamma)}}} [\Term \geq L(n^{1/(1+\gamma)})^{2}].
\end{align*}
Let $c = \ell \cdot \xi - \sum_{j=1}^\ell a_j \cdot \minup+2$. Then $L(n) \geq n/c$ by the definition of $L(n)$ (for $n$ sufficiently large). We need to show that
$L(n^{1/(1+\gamma)})^{2} \geq n^{2-\eps}$ for some $\gamma > 0$. Surely
\[
L(n^{1/(1+\gamma)})^{2} \geq n^{2/(1+\gamma)}/c = n^{2 - 2\gamma/(1+\gamma)}/c.
\]
For  $n$ sufficiently large, we have $n^{\gamma/(1+\gamma)}/c>1$. Therefore,
\[
n^{2 - 2\gamma/(1+\gamma)}/c \geq n^{2-3\gamma/(1+\gamma)}.
\]
Let $\gamma$ be such that $n^{2-3\gamma/(1+\gamma)} = n^{2-\eps}$, therefore $3\gamma/(1+\gamma) = \eps$. Multiplying by $1+\gamma$ we get $\eps + \eps\gamma - 3 \gamma = 0$. Therefore, $\gamma = \eps/(3-\eps)$. This completes the proof.

% Angelic nondeterminism
\subsection{A proof of Theorem~\ref{thm-main-angel}}
The proof is very similar to the one of Lemma~\ref{lem-MDP-finite}. Again, we recall the results of~\cite{BKNW:OC-MDP-term-time} for one-counter machines. We consider the following linear program:
\begin{align*}
\textbf{maximize } &x \text{ subject to}\\
z_q &\leq -x + c + z_p & \text{for all } q &\in Q_n \text{ and } (q,c,p) \in T\\
\textstyle z_q &\leq -x + \sum_{(q,c,p) \in T} P(q,c,p) \cdot (c + z_p) & \text{for all } q &\in Q_p 
\end{align*}
This linear program is feasible, and the maximal value of $x$ is equal to
\[
   \inf\, \{\Exp_p^\sigma[\MP]  \mid \sigma \in \Sigma, p\in Q\}.
\]
Moreover, we can assume that for all $q \in Q$ we have $\bar{z}_q \geq 0$. Direct corollary of~\cite[Proposition 5,(B)]{BKNW:OC-MDP-term-time} is the following Lemma:
\begin{lemma}
Let $(\bar{x}, (\bar{z}_q)_{q \in Q})$ be a solution of the linear program above. If $\bar{x} < 0$ then $\term_{a}(n) \in \Theta(n)$.
\end{lemma}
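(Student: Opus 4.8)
The plan is to prove the two inclusions $\term_a(n)\in\bigO(n)$ and $\term_a(n)\in\Omega(n)$ separately. The lower bound is the easy, structural direction and does not even use $\bar{x}<0$. Fix any $p\in Q$ and consider the configuration $p\,n$ (state $p$, counter value $n$), which has size $n$. Writing $M=\max\{|c|\mid (q,c,p')\in T\}$ for the maximal update magnitude, each transition changes the counter by at most $M$, so along \emph{every} path at least $(n+1)/M$ transitions are needed before the accumulated change drops below $-n$, i.e.\ before the counter becomes negative. Hence $\Term\geq (n+1)/M$ surely, giving $\inf_{\sigma}\Exp_{p\,n}^{\sigma}[\Term]\geq (n+1)/M$ and therefore $\term_a(n)\geq (n+1)/M\in\Omega(n)$.

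For the upper bound I would mirror the supermartingale argument from the proof of Lemma~\ref{lem-MDP-finite}, but now for the \emph{minimizing} controller. Let $(\bar{x},(\bar{z}_q)_{q\in Q})$ be the optimal solution of the linear program, so that $\bar{x}=\inf\{\Exp_p^\sigma[\MP]\mid \sigma\in\Sigma, p\in Q\}<0$. Standard MDP theory yields an optimal memoryless deterministic strategy $\hat{\sigma}$ which, in each nondeterministic state, selects a transition making the corresponding LP constraint tight. With $C^{(i)}=\sum_{j=1}^i u_j$ and $S^{(i)}$ the state after $i$ steps, I would consider the process $m^{(i)}=C^{(i)}+\bar{z}_{S^{(i)}}-i\cdot\bar{x}$. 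Because the constraints at probabilistic states and the constraints selected by $\hat{\sigma}$ at nondeterministic states are tight at the optimum, $m^{(i)}$ is a \emph{martingale} under $\calP^{\hat{\sigma}}$, whence $\Exp^{\hat{\sigma}}[C^{(i)}]=\bar{z}_{S^{(0)}}-\Exp^{\hat\sigma}[\bar{z}_{S^{(i)}}]+i\bar{x}$ drifts to $-\infty$ at rate $\bar{x}<0$.

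Next, exactly as for $\Dec$ in Lemma~\ref{lem-MDP-finite}, I would bound the expected time to decrease the counter by one. On the event $\Dec>i$ we have $C^{(i)}>-1$, hence $m^{(i)}-m^{(0)}\geq -1+\bar{z}_{S^{(i)}}-\bar{z}_{S^{(0)}}-i\bar{x}$, and since $-i\bar{x}=i|\bar{x}|$ dominates the bounded $\bar{z}$-terms we get $m^{(i)}-m^{(0)}\geq i|\bar{x}|/2$ for all large $i$. As the increments of $m^{(i)}$ are bounded by a constant depending only on the machine, the Azuma--Hoeffding inequality gives $\calP^{\hat{\sigma}}_q[\Dec>i]\leq \exp(-a\,i)$ for some $a>0$, which is summable, so $\Exp_q^{\hat{\sigma}}[\Dec]\leq c$ for a constant $c$ uniform in $q$. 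Finally I would iterate: define a ladder of stopping times $0=T_0<T_1<\cdots$ where $T_j$ is the first step at which the counter reaches value $\leq n-j$; by the strong Markov property and stationarity of $\hat{\sigma}$ each increment $T_j-T_{j-1}$ has conditional expectation at most $c$ (it is $0$ on an overshoot and a fresh copy of $\Dec$ otherwise). Thus $\Exp_{p\,n}^{\hat{\sigma}}[\Term]=\Exp_{p\,n}^{\hat{\sigma}}[T_{n+1}]\leq c\,(n+1)\in\bigO(n)$, and taking the maximum over the finitely many states $p$ yields $\term_a(n)\in\bigO(n)$, completing $\Theta(n)$.

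The main obstacle I anticipate is the clean justification of the martingale step while avoiding a circular appeal to optional stopping: one cannot invoke the optional stopping theorem at $\Term$ before knowing $\Exp[\Term]<\infty$. The ladder decomposition sidesteps this, since the exponential tail bound for $\Dec$ yields a \emph{finite} per-phase expectation directly, without presupposing integrability of $\Term$. A secondary point requiring care is that all constants ($c$, the increment bound, the range of the $\bar{z}_q$) must be chosen uniformly over the starting state so that the per-phase bound survives the summation; this uniformity is precisely what strong connectivity and the finiteness of $Q$ provide, and it is also where the cited result \cite[Proposition~5,(B)]{BKNW:OC-MDP-term-time} may be invoked as a black box in place of re-deriving the tail estimate.
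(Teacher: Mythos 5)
Your proposal takes a genuinely different route from the paper: the paper does not actually prove this lemma, it simply declares it a direct corollary of \cite[Proposition 5,(B)]{BKNW:OC-MDP-term-time}, treating the known one-counter-MDP termination bound as a black box. What you do instead is re-derive that result by mirroring, on the angelic side, the paper's own proof of Lemma~\ref{lem-MDP-finite}: a trivial $\Omega(n)$ lower bound from the bounded update magnitude, an optimal MD strategy $\hat{\sigma}$, the process $m^{(i)} = C^{(i)} + \bar{z}_{S^{(i)}} - i\cdot\bar{x}$, an Azuma tail bound giving a uniform constant bound on $\Exp_q^{\hat{\sigma}}[\Dec]$, and a ladder of stopping times yielding $\Exp^{\hat{\sigma}}_{pn}[\Term] \leq c(n+1)$ without any circular appeal to optional stopping. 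That structure is sound, and the ladder decomposition plus uniformity over starting states are exactly how the paper itself converts Lemma~\ref{lem-MDP-finite} into $\term_d(n)\in\mathcal{O}(n)$ in the demonic case; your version buys self-containedness at the cost of length.

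There is, however, one step that fails as written: the claim that ``the constraints at probabilistic states \dots{} are tight at the optimum.'' An \emph{arbitrary} optimal solution of this LP need not satisfy this. Concretely, let $q_1$ be probabilistic with two transitions of update $-1$, to itself and to $q_2$, each with probability $\frac{1}{2}$, and let $q_2$ be nondeterministic with a transition to $q_1$ of update $-1$ and a self-loop of update $-2$. Here $\bar{x}=-2$, and $\bar{z}_{q_1}=\bar{z}_{q_2}=0$ is a feasible optimal solution in which the constraint at $q_1$ is \emph{strict}; under the optimal strategy (always take the self-loop at $q_2$), the expected one-step change of $m^{(i)}$ at $q_1$ is $-1-\bar{x}=+1>0$, so $m^{(i)}$ is not a supermartingale and your Azuma step breaks for this solution. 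The repair is standard: instead of an arbitrary LP-optimal $(\bar{x},\bar{z})$, take a solution of the average-reward optimality (Bellman) equations, which exists with constant gain $\bar{x}$ because the MDP is strongly connected (communicating), is itself LP-optimal, and makes the probabilistic constraints and the constraints chosen by $\hat{\sigma}$ hold with equality. Since the lemma's conclusion depends only on the sign of $\bar{x}$, which is the same for every optimal solution, this substitution is harmless and the rest of your argument goes through verbatim; alternatively, as you note yourself, one can sidestep the whole issue by citing \cite[Proposition 5,(B)]{BKNW:OC-MDP-term-time}, which is precisely what the paper does.
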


Moreover, we use (similarly to the proof of Lemma~\ref{lem-MDP-finite}) that the stochastic process $m^{(0)},m^{(1)},\dots,$ where 
\[
m^{(i)} =
\begin{cases}
C^{(i)} + \bar{z}_{S^{(i)}} - i\cdot \bar{x} & C^{(j)} > 0 \text{ for all j}, 0\leq j < i,\\
m^{(i-1)} & \text{otherwise}
\end{cases}
\]
is a submartingale.\\

Now we use these results on one-dimensional VASS-MDPs to obtain the proof for $d$-dimensional VASS-MDP $\A$ by simply considering $d$ projections on one counter.

A trivial observation gives the following result.
\begin{lemma}
Let $\A$ be a $d$-dimensional VASS-MDP, $\A_1,\dots,\A_d$ corresponding one-dimensional VASS-MDPs obtained by projecting the labels onto respective coordinate. If at least one of the one-dimensional VASS-MDPs has linear angelic termination time, then $\A$ has also linear angelic termination time (using the same strategy).
\end{lemma}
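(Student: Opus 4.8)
The plan is to exploit the pointwise relationship between the termination time of $\A$ and the termination times of its one-dimensional projections. Write $\Term_\A$ for the termination-time random variable in $\A$ and $\Term_{\A_i}$ for the one in $\A_i$. Since $\A$ terminates exactly when \emph{some} counter first becomes negative, while $\A_i$ terminates when its single ($i$-th) counter first becomes negative, every infinite path satisfies $\Term_\A(\pi) = \min_{1 \leq i \leq d}\Term_{\A_i}(\pi) \leq \Term_{\A_i}(\pi)$ for each fixed $i$. First I would record this inequality, together with the observation that $\A$, $\A_1,\ldots,\A_d$ all live on the same underlying MDP (only the labels differ, $\A_i$ being the $i$-th projection of the labels of $\A$), so that a strategy can be transferred between them.

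Next I would fix the index $i$ for which $\term_a(n)\in\bigO(n)$ holds in $\A_i$, and fix an arbitrary initial configuration $p\bv\in\conf_n(\A)$. Because $\Term_\A\leq\Term_{\A_i}$ holds pointwise and both are non-negative, for every strategy $\sigma$ we have $\Exp_{p\bv}^{\sigma}[\Term_\A]\leq\Exp_{p\bv}^{\sigma}[\Term_{\A_i}]$; taking the infimum over $\sigma$ on both sides yields $\inf_\sigma\Exp_{p\bv}^{\sigma}[\Term_\A]\leq\inf_\sigma\Exp_{p\bv}^{\sigma}[\Term_{\A_i}]$. This is precisely the ``same strategy'' phenomenon: a (near-)optimal angelic strategy for $\A_i$, lifted to $\A$ so that the $i$-th coordinate evolves identically, makes the $i$-th counter go negative quickly and therefore terminates $\A$ at least as fast.

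Then I would translate the right-hand side into the termination complexity of $\A_i$. The value $\inf_\sigma\Exp_{p\bv}^{\sigma}[\Term_{\A_i}]$ depends only on the state $p$ and the coordinate $\bv(i)$, so it coincides with the value of the configuration $p\,\bv(i)$ in $\A_i$, whose size is $|\bv(i)|\leq\size{\bv}=n$. By the definition of $\term_a$ as a maximum over configurations of a fixed size, this value is at most $\term_a(|\bv(i)|)$ computed in $\A_i$. Invoking $\term_a(m)\leq a\cdot m$ for all large $m$ (and finiteness for the finitely many small $m$), one gets $\term_a(|\bv(i)|)\leq a'\cdot n$ for a constant $a'$ and all sufficiently large $n$. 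Taking the maximum over all $p\bv\in\conf_n(\A)$ gives $\term_a(n)\leq a'\cdot n$ in $\A$, i.e.\ $\term_a(n)\in\bigO(n)$.

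I do not expect a genuine obstacle, as the statement is essentially a monotonicity observation; the two points requiring a little care are the transfer of strategies and the size bookkeeping. For the former, one should note that the infima are taken over compatible strategy spaces, so that an $\A_i$-strategy genuinely lifts to an $\A$-strategy realising the same $i$-th-coordinate trajectory. For the latter, one must remember that projecting a size-$n$ configuration of $\A$ yields an $\A_i$-configuration of size at most $n$ (not necessarily exactly $n$), so the linear bound on $\term_a$ in $\A_i$ should be applied uniformly at all arguments $\leq n$ rather than only at $n$.
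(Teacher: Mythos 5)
Your proof is correct and takes essentially the same approach as the paper: the paper states this lemma with no argument at all (``a trivial observation gives the following result''), and your write-up---the pointwise identity $\Term_{\A}=\min_{1\leq i\leq d}\Term_{\A_i}\leq\Term_{\A_i}$, lifting an $\A_i$-strategy to $\A$ with the same $i$-th coordinate trajectory, and the size bookkeeping $|\bv(i)|\leq\size{\bv}$---is exactly the formalization of that observation. No gaps.
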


In order to obtain the result for at least quadratic termination, we use the Azuma inequality for all the submartingales obtained from the one-dimensional VASS-MDPs.\\

Let $m_j^{(0)},m_j^{(1)}$ be the submartingale for $\A_j$, $\bar{Z}_j = \max_{q \in Q} \bar{z}_q$ obtained from the corresponding linear program (and assuming all values $\bar{z}_q$ are non-negative).\\

Given an initial configuration $p(n,\dots,n)$, the probability that the $j$-th counter decreases below zero in $t$ steps can be bounded from above:
\[
P(m_j^{(t)} - m_j^{(0)} \leq  -n+Z_j) \leq \mathrm{exp}(-(n-Z_j)^2 / t\alpha)
\]
where $Z_j$ and $\alpha$ are constants independent of $n$.

Let $Z = \max_{j=1,\dots,d} Z_j$ and $n \geq 2Z$, then:
\[
P(m_j^{(t)} - m_j^{(0)} \leq  -n+Z) \leq \mathrm{exp}(-(n-Z_j)^2 / t\alpha) \leq \mathrm{exp}(-n^2 / 4t\alpha).
\]

Observe that there exists a suitable constant $c > 0$ such that 
\[
d \cdot \mathrm{exp}(-c / 4\alpha) < 1
\]
since $d$ and $\alpha$ are constants (depending only on $\A$). Therefore, taking $t = n^2/c$, we obtain that the probability of some counter decreasing below zero is $1-\delta$ for some $\delta > 0$, and thus $\term_a(n) \in \Omega(n^2)$.

 Let $\eps >0$ and $t = n^{2-\eps}$, then
\[
\lim_{n\to \infty} P(m_j^{(t)} - m_j^{(0)} \leq  -n+Z_j) \leq \lim_{n \to \infty}\mathrm{exp}(-(n-Z_j)^2 / n^{2-\eps}\alpha) = 0.
\]
This completes the proof of Theorem~\ref{thm-main-angel}.

\subsection{A proof of Theorem~\ref{thm:general-main}}\label{app-MEC}
We can divide the set of states of $\A$ into the states belonging to some MEC and transient states. We rely on the two following facts:
\begin{enumerate}
\item For every strategy, the expected number of transitions from a transient state to some MEC state can be bounded by a constant $k$ (a number dependent only on $\A$ and not the size of the initial configuration).
\item The asymptotic complexity of a MEC does not depend on the initial state.
\end{enumerate}

First, we consider the demonic case. In DAG-like VASS-MDP, the only loops in the MEC decomposition are self-loops, i.e., once we leave MEC $M$ and visit a different MEC $M'$, we may never return to $M$. Moreover, there is a probability $p < 1$ such that for every MEC $M$ and every strategy, we revisit $M$ after leaving it (i.e., execute the self loop on $M$) with probability at most $p$.

For the ``if'' direction, we assume that the initial state is in a MEC $M$. We compute the upper bound on the expected number of transitions before terminating or arriving into another MEC. Let $Q'$ be the states of every MEC different from $M$.
We know that if MEC $M$ is linear then there exists $\bw_M > 0$ such that all increments $\bi_1,\dots,\bi_r$ in $M$ satisfy $\bi_j \cdot \bw_M < 0$.
 Let us consider a $\Q$-labeled MDP $\A_{\bw}$ obtained from $\A$ by replacing each label $\bu \in \Z^d$ by $\bu \cdot \bw_M \in \Q$.
 
Now we construct a supermartingale similar to the one in the proof of Lemma~\ref{lem-MDP-finite}. Again, for every $i \in \N$, let $S^{(i)}$ and $C^{(i)}$ be functions assigning to every infinite path $\pi = p_0,u_0,p_1,u_2,\dots$ in $\A_{\bw}$ initiated in $p$ the state $p_i$, and the sum $\bw \cdot  \bn + \sum_{j=1}^i u_j$ (where $\bn = (n,n,\dots,n)$ is the initial counter vector in $\A$). Furthermore, let $M.steps(i)$ and $T.steps(i)$ be functions counting for every infinite path $\pi$ the number of transitions in the MEC $M$ and in the transient states before entering a MEC different from $M$ (a transition $t = (q,\bu,q')$ is in $M$ if both $q,q'$ are in $M$).

Let $p$ be any initial state and $\sigma$ arbitrary strategy. Then the following sequence of random variables is a  supermartingale:
\[
m^{(i)} = \begin{cases}
C^{(i)} + \bar{z}_{S^{(i)}} - M.steps(i) \cdot \bar{x} + T.steps(i) \cdot K & \text{if } C^{(i)} \geq 0 \text{ and } S^{(i)} \neq Q'\\
m^{(i-1)} & \text{otherwise}
\end{cases}
\]
where $K$ is sufficiently large constant. We want to compute the expected value of $M.steps$. For every $i$ we have:
\[
\Exp_p^{\sigma}(m^{(i)}) = \Exp_p^{\sigma}(C^{(i)} + \bar{z}_{S^{(i)}} - M.steps(i) \cdot \bar{x} + T.steps(i) \cdot K).
\]
We know that $\Exp_p^{\sigma}(T.steps) \leq k$ and $\Exp_p^{\sigma}(\bar{z}_{S^{(i)}})$ is bounded by a constant. Moreover, $m^{(0)} \geq \bw \cdot \bn + K_1$ where $K_1$ is a constant depending only on $\A_{\bw}$. Using the property of a supermartingale, we obtain
\[
\Exp_p^{\sigma}(m^{(i)}) = \Exp_{p}^{\sigma}(C^{(i)}) + K_2 - \Exp_{p}^{\sigma}(M.steps(i)) \cdot \bar{x} \leq \bw\cdot \bn.
\]
Since $\bar{x} < 0$, we have for every $i \in \N$ that 
\[
\Exp_p^{\sigma}(M.steps(i)) \leq (\bw \cdot \bn -K_2 - \Exp_p^{\sigma}(C^{(i)}))/ |\bar{x}|,
\]
therefore $\Exp_p^{\sigma}(M.steps) + \Exp_p^{\sigma}(T.steps) \leq c\cdot n$ for a suitable constant $c$.\\

The time spent in one MEC can be used to increase some counters that can be used by MECs visited later. However, once we visit MEC $M'$, we never return to $M$. 
We define the height of MEC $M$ to be the length of a longest path in the MEC decomposition from $M$ into a bottom MEC (if MEC $M'$ can be visited from $M$, then $M'$ has higher height).

Let $\maxup = \max\{\|\bu\|; (q,\bu,q') \in T\}$ be the size of maximum counter change per transition.

We assume that for every MEC, the demonic termination complexity is bounded by $rn$ for all $n \in \N$. Let $i$ be the height of the MEC containing the initial state (or $i$ is such that $i-1$ is the height of a reachable MEC with the largest height). By induction on $i$, we prove that the expected termination time is bounded by $(\maxup \cdot r+1)^{i} \cdot n$ for $n$ sufficiently high.

If $i=0$, then $\term_d(n) \leq rn$.

Assume that the height of MEC $M$ containing the initial configuration is $i+1$ and and the expected termination time for $i$ is bounded by $(r+1)^{i} \cdot n$ (if we start in some transient state, the expected number of steps into a MEC with height at most $i$ is constant and the induction step holds). 
We divide every path $\pi = \pi_1\pi_2$ where $\pi_1$ is the part of $\pi$ prior the arrival into the lower MEC. 
The expected length of $\pi_1$ is bounded by $rn$. Therefore, we have the following upper bound on the expected size of counters when arriving into the lower MEC: $n + rn\cdot \maxup = (\maxup \cdot r +1) \cdot n$.

Using induction hypothesis, the expected length of $\pi$ is then $(\maxup \cdot r + 1)^{i+1} \cdot n$ which completes the proof.\\

For the ``only if'' direction, consider an initial state to be in a MEC with at least quadratic termination complexity. Using the corresponding strategy, we obtain the result.\\

Now we turn to the angelic case. If all bottom MECs are linear, there exists a strategy reaching one of the MECs in expected constant time. Therefore, the complexity is the same as in the bottom MEC, i.e., linear. If some of the bottom MECs is at least quadratic, then starting in that MEC, we obtain at least quadratic termination complexity.

\end{document}